\documentclass[preprint,10pt]{elsarticle}
\makeatletter
\def\ps@pprintTitle{%
\let\@oddhead\@empty
\let\@evenhead\@empty
\def\@oddfoot{\centerline{\thepage}}%
\let\@evenfoot\@oddfoot}
\makeatother

\journal{Chaos, Solitons \& Fractals}

\usepackage{graphicx,bbm,setspace,amssymb,amsfonts,amsmath,mathtools,mathrsfs,stmaryrd,amsthm,bm,etoolbox,comment,hyperref,url,caption,subcaption,color,enumitem,algorithm,booktabs,microtype,nicefrac,lingmacros,nccmath,tree-dvips,tikz-cd}
\usepackage[noend]{algpseudocode}
\usepackage{algorithm}

\usepackage[colorinlistoftodos]{todonotes}
\usepackage[utf8]{inputenc} 
\usepackage[T1]{fontenc}    

\usepackage[noend]{algpseudocode}

\newtheorem{theorem}{Theorem}[section]

\newtheorem{thm-defn}[theorem]{Theorem/Definition}

\newtheorem{proposition}[theorem]{Proposition}

\theoremstyle{definition}

\theoremstyle{remark}

\newcommand{\ignore}[1]{}{}

\begin{document}

\begin{frontmatter}

\title{Geometric persistence and distributional trends in worldwide terrorism}

\author[label1,label2]{Nick James} 
\author[label3]{Max Menzies} \ead{max.menzies@alumni.harvard.edu}
\author[label4]{James Chok}
\author[label5]{Aaron Milner} %
\author[label6]{Cas Milner}
\address[label1]{School of Mathematics and Statistics, University of Melbourne, Victoria, Australia}
\address[label2]{Melbourne Centre for Data Science, University of Melbourne, Victoria, Australia}
\address[label3]{Beijing Institute of Mathematical Sciences and Applications, Tsinghua University, Beijing, China}
\address[label4]{School of Mathematics, The University of Edinburgh, Edinburgh, United Kingdom}
\address[label5]{Lyndon Baines Johnson School of Public Affairs, University of Texas at Austin, Texas, United States}
\address[label6]{Department of Physics, Southern Methodist University, Texas, United States}

\begin{abstract}
This paper introduces new methods for studying the prevalence of terrorism around the world and over time. Our analysis treats spatial prevalence of terrorism, the changing profile of groups carrying out the acts of terrorism, and trends in how many attacks take place over time. First, we use a time-evolving cluster analysis to show that the geographic distribution of regions of high terrorist activity remains relatively consistent over time. Secondly, we use new metrics, inspired by geometry and probability, to track changes in the distributions of which groups are performing the terrorism. We identify times at which this distribution changes significantly and countries where the time-varying breakdown is most and least homogeneous. We observe startling geographic patterns, with the greatest heterogeneity from Africa. Finally, we use a new implementation of distances between distributions to group countries according to their incidence profiles over time. This analysis can aid in highlighting structural similarities in outbreaks of extreme behavior and the most and least significant public policies in minimizing a country's terrorism.

\end{abstract}

\begin{keyword}
Terrorism \sep Cluster analysis \sep Time series analysis \sep Metric geometry
\end{keyword}

\end{frontmatter}

\section{Introduction}
\label{sec:intro}

Terrorism and other violent coordinated behavior have a highly deleterious effect on society. Terrorism not only may cause loss of life and injury, it may also thoroughly destabilize communities and governments \cite{UNterrorism}, cause enormous economic costs \cite{Nato_review}, affect financial markets \cite{Arif2017}, and damage key industries such as tourism \cite{Dory2021}.

Understanding trends in terrorism over time assists governments in anticipating and responding to its acute impact. The observation of geographic spread may inspire new approaches to geopolitical engagement \cite{Africa_geopolitics}; a changing composition of active terrorist organizations may require changes in negotiation or military response \cite{Woodward2005}; an increasing number of terrorist activities is cause for concern and could warrant a variety of measured responses.

Our data comes from the Global Terrorism Database (GTD) \cite{GTD}, a comprehensive catalog of more than 200 000 incidents from 1970 through 2018.  Incident data include date, location, perpetrator (including group responsible), and several other features. We remark that the commonly used term ``terrorism'' is a complex concept with emotional, political, and legal dimensions. The organization maintaining the GTD defines terrorism as ``the threatened or actual use of illegal force and violence by a non-state actor to attain a political, economic, religious, or social goal through fear, coercion, or intimidation.'' The data collecting organization ``uses a series of inclusion criteria to systematically identify events for inclusion in the database.'' We do not opine on the definition of terrorism, only including events from this database without additional selection criteria.

This paper builds on a long literature of \emph{multivariate time series analysis}, developing a new mathematical method and a more extensive analysis of terrorism dynamics than previously performed. Existing methods of time series analysis include parametric models \cite{Hethcote2000} such as exponential  \cite{Chowell2016} or power-law models, \cite{Vazquez2006} and nonparametric methods such as distance analysis \cite{Moeckel1997,James2020_nsm}, distance correlation \cite{Szkely2007,Mendes2018,Mendes2019}, Bayesian approaches \cite{james2021_spectral} and network models \cite{Shang2020}. We draw upon numerous approaches that have been successfully applied in numerous disparate fields such as epidemiology \cite{james2021_CovidIndia,Li2021_Matjaz,Manchein2020,Blasius2020,james2021_TVO,Perc2020,Machado2020}, finance \cite{Drod2021_entropy,james_georg,Wtorek2021_entropy,james_arjun,Drod2001,james2022_stagflation,Gbarowski2019,james2021_MJW}, cryptocurrency \cite{Sigaki2019,Drod2020_entropy,James2021_crypto2,Drod2020,Wtorek2020}, crime \cite{james2022_guns,Perc2013}, fine art \cite{Perc2020_art,Sigaki2018_art} and other fields \cite{Ribeiro2012,james2021_hydrogen,Merritt2014,james2021_olympics,Clauset2015,james2022_CO2}. We hope our work can complement other mathematical analyses of terrorism, which are relatively small in number compared to other fields but cross dynamical systems \cite{Lopes2016}, network analysis \cite{Latora2004,Eiselt2018}, power models \cite{Clauset2007,Bohorquez2009}, Poisson models \cite{Allegrini2004}, agent-based models \cite{Elliott2004}, percolation models \cite{Galam2002,Galam2003,Galam2003_2} and other parametric models \cite{Helbing2014,Johnson2011}.

\emph{Cluster analysis} is another common statistical method with successful applications to numerous fields. Designed to group data points according to similarity, clustering algorithms are highly varied - common examples are K-means \cite{Lloyd1982} and spectral clustering \cite{vonLuxburg2007}, which partition elements into discrete sets, and hierarchical clustering \cite{Ward1963,Szekely2005}, which does not specify a precise number of clusters. In this paper, we will use hierarchical clustering and K-means; the latter requires an initial choice of the number of clusters $k$. We draw upon several methods to address the subtle question of how to select this $k$. The primary application of cluster analysis in this paper is to reveal geographic persistence of terrorist activity over time, and to investigate relationships between different years regarding the composition of which perpetrator organizations are responsible for the totality of attacks.

This paper is structured as follows. In Section \ref{sec:clusteranalysis}, we use cluster analysis to reveal geographic persistence, namely consistent patterns where terrorism is occurring over time. In Section \ref{sec:simplexdistances}, we analyze changes in the composition of perpetrator organizations, that is who is committing terrorism, both within specified countries and in observed clusters of Section \ref{sec:clusteranalysis}. We identify countries with the greatest diversity of terrorist actors and key times in which particular countries' perpetrator profile changed. In Section \ref{sec:TSdistributions}, we analyze the changing number of attacks over time, and analyze similarity between different countries' trajectories of attacks over time.

Together, these analyses provide insight about geographic persistence of terrorism over time (where it is occurring), the changing composition of terrorist organizations committing the attacks (who is committing the terrorism) and the change in the number of attacks over time (how much terrorism is occurring), all analyzed over time. This analysis may assist geopolitical and social researchers and policymakers working to identify the most and least successful countries in combating any increasing terrorist activity, and could provide blueprints for increased international collaboration and better relations between countries.

\section{Geographic persistence via clustering}
\label{sec:clusteranalysis}

In this section, we investigate where terrorism occurs, specifically, the geographic persistence of terrorism over time, grouping terrorist attacks by year. Our data spans 2000-2018 inclusive, a period of $T=19$ years. We begin with the year 2000 as the last two decades have been characterized by a considerable increase in the prevalence of and worldwide attention on terrorism. Indeed, in 1998, Osama Bin Laden first published al Qaeda's fatwa against the western world \cite{alqaedafatwa}, and 2000 marked the first wave of synchronized al Qaeda terrorist activity against the United States \cite{millenniumplots,USSCole}. Soon after, the 9/11 terrorist attacks occurred and the global war on terror began.

We aim to demonstrate a considerable persistence in the geographic distribution of terrorist activities across the world. For this purpose, we use cluster analysis to group events, as recorded in the GTD \cite{GTD}, by their geographic proximity on a year-by-year basis. By calculating various estimates of the total number of clusters, we show a considerable persistence in the geographic patterns of terrorist attacks each year.

Specifically, we index our years $t=1,...,T$. For each year, we consider the full collection of $N_t$ attacks observed in that year. Using the haversine formula \cite{haversine}, we calculate all pairwise geodesic distances between the locations of events for that year, generating a $N_t \times N_t$ matrix. We make several adjustments to the data set, such as excluding Australia, New Zealand and Japan from the data, each of which reports very few attacks, and separate the Western Hemisphere (the Americas) from the Eastern Hemisphere (Europe, Asia, Africa and Oceania). These two hemispheres are separated by large distances across oceans, and thus no geographic cluster of terrorist activity should cross over into both hemispheres. Separating them before implementing clustering is computationally efficient and avoids rare errors where isolated attacks in northern South America are connected to larger clusters in East Africa.

Finally, we draw upon $m=16$ different methods \cite{Charrad2014_nbclust} for selecting the number of clusters each year, to produce estimates $k_1(t),...,k_{m}(t)$ for each $t$. We record the estimated cluster numbers for each year of data and utilized the method in Table \ref{tab:americas} for the Western Hemisphere and Table \ref{tab:RoW} for the Eastern Hemisphere. We provide an overview of clustering and some of the methods used in  \ref{app:cluster}.

Almost every chosen method exhibits a considerable persistence in the detected number of clusters, suggesting a persistence in the geographic localization of terrorist attacks over time. By averaging over all chosen methods across all years, we estimate a persistent total of $K=4$ and 11 regions for the Western and Eastern hemispheres, respectively.

\begin{table*}
\begin{tabular}{p{1.4cm}|p{0.7cm}|p{0.3cm}|p{0.3cm}|p{0.3cm}|p{0.3cm}|p{0.3cm}|p{0.3cm}|p{0.3cm}|p{0.3cm}|p{0.3cm}|p{0.3cm}|p{0.3cm}|p{0.3cm}|p{0.3cm}|p{0.3cm}|p{0.3cm}|p{0.3cm}|p{0.3cm}|p{0.3cm}}
\toprule
{} &  2000 & '01 & '02 & '03 & '04 & '05 & '06 & '07 & '08 & '09 & '10 & '11 & '12 & '13 & '14 & '15 & '16 & '17 & '18 \\
\midrule
ptbiserial &     4 &     5 &     5 &     4 &     3 &     5 &     5 &     4 &     4 &     3 &     5 &     5 &     5 &     4 &     4 &     5 &     5 &     4 &     3 \\
silhouette &     4 &     4 &     4 &     5 &     2 &     4 &     5 &     5 &     3 &     3 &     5 &     4 &     3 &     3 &     3 &     3 &     3 &     3 &     5 \\
kl         &     4 &     2 &     4 &     2 &     4 &     4 &     5 &     3 &     2 &     3 &     3 &     5 &     2 &     4 &     3 &     3 &     5 &     5 &     4 \\
cindex     &     2 &     2 &     5 &     5 &     3 &     5 &     5 &     3 &     3 &     5 &     4 &     3 &     5 &     2 &     4 &     4 &     5 &     5 &     4 \\
mcclain    &     4 &     3 &     2 &     3 &     2 &     5 &     2 &     2 &     4 &     2 &     2 &     2 &     5 &     2 &     2 &     2 &     2 &     2 &     2 \\
dunn       &     3 &     5 &     2 &     3 &     5 &     5 &     4 &     5 &     3 &     3 &     5 &     3 &     5 &     5 &     5 &     3 &     5 &     2 &     5 \\
ch         &     5 &     5 &     3 &     4 &     4 &     4 &     5 &     3 &     5 &     4 &     5 &     5 &     3 &     4 &     4 &     4 &     5 &     5 &     5 \\
hartigan   &     3 &     3 &     3 &     4 &     4 &     4 &     3 &     3 &     4 &     3 &     3 &     3 &     5 &     3 &     3 &     4 &     3 &     3 &     4 \\
ccc        &     2 &     2 &     3 &     2 &     2 &     5 &     2 &     5 &     2 &     2 &     5 &     5 &     2 &     5 &     5 &     5 &     5 &     5 &     5 \\
trcovw     &     4 &     3 &     3 &     3 &     3 &     3 &     3 &     3 &     3 &     3 &     3 &     3 &     3 &     3 &     3 &     3 &     3 &     3 &     3 \\
tracew     &     4 &     4 &     3 &     4 &     3 &     4 &     4 &     3 &     4 &     3 &     3 &     3 &     3 &     3 &     3 &     3 &     3 &     3 &     4 \\
friedman   &     4 &     5 &     5 &     4 &     5 &     4 &     4 &     4 &     4 &     4 &     5 &     5 &     3 &     4 &     5 &     4 &     5 &     4 &     4 \\
rubin      &     4 &     4 &     3 &     4 &     4 &     4 &     4 &     3 &     4 &     4 &     4 &     3 &     3 &     4 &     3 &     4 &     3 &     3 &     4 \\
db         &     5 &     4 &     4 &     5 &     3 &     5 &     2 &     4 &     3 &     5 &     5 &     4 &     5 &     4 &     4 &     3 &     4 &     5 &     3 \\
duda       &     2 &     5 &     3 &     2 &     2 &     3 &     2 &     3 &     3 &     3 &     3 &     3 &     3 &     3 &     3 &     3 &     3 &     3 &     0 \\
gap        &     2 &     2 &     3 &     2 &     2 &     2 &     2 &     3 &     2 &     2 &     5 &     5 &     2 &     3 &     3 &     3 &     3 &     3 &     2 \\
\bottomrule
\end{tabular}
\caption{Estimates of the number of clusters of terrorist attacks on a year-by-year basis across the Americas (Western Hemisphere). 16 different methods are compared (\cite{Charrad2014_nbclust}, Section 2 and Table 2), with considerable persistence over time in the number of clusters.}
\label{tab:americas}
\end{table*}

\begin{table*}
\begin{tabular}{p{1.4cm}|p{0.7cm}|p{0.3cm}|p{0.3cm}|p{0.3cm}|p{0.3cm}|p{0.3cm}|p{0.3cm}|p{0.3cm}|p{0.3cm}|p{0.3cm}|p{0.3cm}|p{0.3cm}|p{0.3cm}|p{0.3cm}|p{0.3cm}|p{0.3cm}|p{0.3cm}|p{0.3cm}|p{0.3cm}}

\toprule
{} &  2000 & '01 & '02 & '03 & '04 & '05 & '06 & '07 & '08 & '09 & '10 & '11 & '12 & '13 & '14 & '15 & '16 & '17 & '18  \\
\midrule
ptbiserial &     4 &     4 &     6 &     4 &     4 &     5 &     7 &     5 &     9 &     6 &     5 &     5 &     4 &     5 &     9 &     5 &     6 &     6 &     5 \\
silhouette &    25 &    25 &    25 &    25 &    15 &    14 &    17 &    13 &    17 &    20 &    13 &    16 &    10 &    14 &    24 &    23 &    14 &    14 &    23 \\
kl         &    17 &    11 &     7 &     7 &    15 &     7 &    20 &     8 &    10 &    24 &    13 &    16 &     7 &    14 &    23 &    18 &     9 &     9 &    13 \\
cindex     &    25 &    25 &    25 &    25 &    25 &    25 &    25 &    25 &    25 &    25 &    25 &    25 &    25 &    21 &    25 &    25 &    25 &    25 &    25 \\
mcclain    &     2 &     2 &     2 &     2 &     2 &     2 &     2 &     2 &     2 &     2 &     2 &     2 &     2 &     2 &     2 &     2 &     2 &     2 &     2 \\
dunn       &     8 &    25 &    12 &    22 &    25 &    21 &    22 &    24 &    21 &    11 &    21 &    15 &    20 &    13 &    25 &    16 &     8 &    11 &     2 \\
ch         &    25 &    25 &    24 &    25 &    17 &    17 &    18 &    14 &    17 &    20 &    13 &    17 &    10 &    14 &    13 &    17 &    14 &    14 &    13 \\
hartigan   &     8 &     7 &     6 &     4 &     6 &     7 &     9 &     8 &     3 &     5 &     6 &     7 &     6 &     5 &     8 &     5 &     6 &     8 &     7 \\
ccc        &    25 &    25 &    24 &    25 &    17 &    17 &    18 &    14 &    17 &    20 &    13 &    17 &    10 &    14 &    13 &    17 &    14 &    14 &    13 \\
trcovw     &     6 &     3 &     4 &     4 &     3 &     3 &     4 &     3 &     4 &     5 &     3 &     7 &     4 &     6 &     8 &     6 &     7 &     9 &     3 \\
tracew     &     4 &     7 &     4 &     4 &     3 &     3 &     4 &     8 &     4 &     5 &     6 &     7 &     4 &     6 &     8 &     6 &     7 &     4 &     4 \\
friedman   &    17 &    24 &    23 &    17 &    15 &    14 &    18 &    13 &    16 &    20 &    13 &    16 &    10 &    14 &    13 &    12 &    14 &    14 &    13 \\
rubin      &    17 &    24 &    24 &     7 &    15 &     7 &    16 &     8 &    16 &    16 &    13 &    16 &    10 &    14 &    23 &    17 &    12 &    14 &    13 \\
db         &    25 &    24 &    25 &     9 &    24 &    25 &    19 &    15 &     2 &    22 &     9 &    13 &    11 &     5 &     2 &     2 &    20 &     2 &    11 \\
duda       &     2 &     2 &     2 &     2 &     4 &     3 &     2 &     3 &     2 &     2 &     2 &     2 &     2 &     2 &     2 &     2 &     2 &     2 &     2 \\
gap        &     2 &     2 &     2 &     2 &     2 &     2 &     2 &     2 &     2 &     2 &     2 &     2 &     2 &     2 &     2 &     2 &     2 &     2 &     2 \\
\bottomrule
\end{tabular}
\caption{Estimates of the number of clusters of terrorist attacks on a year-by-year basis across the Eastern Hemisphere, comprising Europe, Africa and Asia, but excluding Australia, New Zealand and Japan. 16 different methods are compared (\cite{Charrad2014_nbclust}, Section 2 and Table 2), with considerable persistence over time in the number of clusters.}
\label{tab:RoW}
\end{table*}

\begin{table*}
\centering
\begin{tabular}{ccccccc}
\toprule
{} & \multicolumn{2}{c}{Americas} & \multicolumn{2}{c}{Eastern Hemisphere} & \multicolumn{2}{c}{Worldwide}\\
\hline
{} & Rand & Adjusted & Rand & Adjusted  & Rand & Adjusted \\
\midrule
2000 &                0.931 &                    0.857 &                   0.907 &                       0.692 &                 0.927 &                     0.716 \\
2001 &                0.934 &                    0.852 &                   0.974 &                       0.921 &                 0.980 &                     0.926 \\
2002 &                0.990 &                    0.975 &                   0.968 &                       0.907 &                 0.976 &                     0.919 \\
2003 &                0.952 &                    0.897 &                   0.950 &                       0.860 &                 0.961 &                     0.872 \\
2004 &                0.857 &                    0.672 &                   0.947 &                       0.873 &                 0.952 &                     0.878 \\
2005 &                0.950 &                    0.899 &                   0.960 &                       0.902 &                 0.963 &                     0.905 \\
2006 &                0.683 &                    0.351 &                   0.923 &                       0.812 &                 0.926 &                     0.816 \\
2007 &                0.865 &                    0.726 &                   0.966 &                       0.915 &                 0.967 &                     0.916 \\
2008 &                0.979 &                    0.954 &                   0.978 &                       0.946 &                 0.980 &                     0.947 \\
2009 &                0.921 &                    0.807 &                   0.999 &                       0.997 &                 0.999 &                     0.997 \\
2010 &                0.979 &                    0.953 &                   0.997 &                       0.993 &                 0.997 &                     0.993 \\
2011 &                0.955 &                    0.898 &                   0.993 &                       0.984 &                 0.993 &                     0.985 \\
2012 &                0.991 &                    0.982 &                   0.941 &                       0.850 &                 0.944 &                     0.852 \\
2013 &                0.931 &                    0.855 &                   0.971 &                       0.931 &                 0.972 &                     0.932 \\
2014 &                0.956 &                    0.906 &                   0.954 &                       0.887 &                 0.956 &                     0.889 \\
2015 &                0.992 &                    0.985 &                   0.962 &                       0.905 &                 0.964 &                     0.906 \\
2016 &                0.944 &                    0.885 &                   0.964 &                       0.909 &                 0.966 &                     0.910 \\
2017 &                0.954 &                    0.906 &                   0.995 &                       0.986 &                 0.995 &                     0.986 \\
2018 &                0.740 &                    0.494 &                   0.990 &                       0.973 &                 0.991 &                     0.972 \\
\bottomrule
\end{tabular}
\caption{Rand index and adjusted Rand index comparing the clusters on the full dataset across all time and the results of clustering on each individual year. The values close to 1 indicate high similarity, thus this table corroborates the considerable  persistence of geographic clusters of terrorism over time.}
\label{tab:Rand}
\end{table*}

Having observed the persistence in these cluster numbers on a yearly basis, we now wish to aggregate the clusters from each year so we can study the changing composition of these persistent regions over time. For this purpose, we apply the K-medoids algorithm over the entire set of attacks over 2000-2018, and cluster them into $K=4$ groups for the Americas (the Western Hemisphere) and $K=11$ for the Eastern Hemisphere. Two resulting clusters (one from each hemisphere) have fewer than 10 events from across the entire period, and are removed as outlier clusters. This yields 13 geographic clusters or regions of the world's terrorist attacks over time.

In the Western Hemisphere, the regions observed are North America, Colombia and Ecuador, and southern South America. In the Eastern Hemisphere, the regions observed are southern Africa, East Africa, West Africa, Europe, the Middle East, South Asia, East China, central Russia and Siberia, Southeast Asia, and the Philippines (and nearby Oceanic islands). In subsequent sections, we make use of these determined regions, where we study changes in the composition of terrorist attacks, both on a country-by-country basis and within our identified regions.

To examine the robustness of our clustering regions over time, we use an alternative measure of the consistency of the cluster results in different years. For each year $t$ we consider two different clusterings on the set $S_t$ of attacks observed in that year. The first is the result of K-medoids clustering ($K=13$) applied directly to $S_t$. The second is the result of taking our 13 clusters over all time and subsetting/restricting to year $t$. As this produces two different clusterings on the same finite set $S_t$, we may compare them using the Rand or adjusted Rand index \cite{Rand1971}. Essentially, this is a comparison of how much the operations ``cluster the data'' and ``restrict the data to one year'' commute. Similar analyses to examine the stability of clusterings in financial markets have been performed by \cite{Alves2020}. We record the Rand and adjusted Rand indices for the Americas (Western Hemisphere), Eastern Hemisphere, and entire world and each year in Table \ref{tab:Rand}.

A value close to 1 indicates a high similarity between two clusterings. As we can see, these indices are consistently high, with the Rand index for the Eastern Hemisphere or entire world never falling below 0.9 in all the years of analysis. Slightly less geographic persistence is observed for the Western Hemisphere (the Americas), particularly in 2006 and 2018. In these years, K-medoids clustering allocates a number of attacks in North America (specifically Central America) to the cluster of northern South America (Colombia and Ecuador). That is, applying K-medoids clustering to the attacks in these two years in isolation produces a slight blurring between the determined regions of North America and Colombia and Ecuador, as listed previously in this section. With this mild exception, we see very high geographic persistence worldwide and over time.

\section{Analysis of the composition of perpetrators}
\label{sec:simplexdistances}

In this section, we turn to an analysis of which actors and groups (including group designations) are committing the acts of terrorism within each region. Specifically, among the top 40 countries with the greatest total counts of terrorist events, we investigate the changes in the composition of terror perpetrators on a yearly basis. For each determined country, we construct a $T \times T$ distance matrix between different distributions of perpetrators by year. In addition, we perform the same analysis for the determined regions from Section \ref{sec:clusteranalysis}.

\subsection{Distance between compositional distributions}
\label{sec:simplexdistancesmethod}

Given a year $t$ and a given country or region (where at least one attack occurred), let $A_t$ be the sets of terror perpetrators of attacks in that year, as recorded in the GTD database. For any $x \in A_t$, let $p^{(t)}_x$ be the proportion of attacks attributable to the group $x$. Many attacks  are unknown in their attribution, thus, ``unknown'' is a permissible group. As every attack carries a unique primary attribution, $\sum_{x \in A_t} p^{(t)}_x = 1$. So we may understand $\mathbf{p}^{(t)}$ as a probability vector of length $|A_t|$.

Analogously, let $p^{(s)}_y$ be the proportion of attacks attributable to the group $y \in A_s$. Define a distance between the two distributions of terrorism by 
\begin{align}
\label{eq:simplexdist}
    d(s,t)= \frac12 \sum_{z \in A_s \cup A_t} |p^{(s)}_z - p^{(t)}_z|.
\end{align}
This has the property that $d(s,t)=0$ if and only if years $s,t$ have an identical proportion of terrorist attacks among perpetrator groups. In addition, $d(s,t)\leq 1$, with equality if and only if $A_s$ and $A_t$ are disjoint, with no intersection in perpetrators between the two years. If $A_s$ is the empty set, we exclude the year $s$ from consideration. Thus, a single country or region $R$ produces a $S \times S$ matrix $D^R$, where $S$ is the number of years where at least one attack was recorded; for most regions and countries, $S=T$.

There are several enlightening interpretations of the distance defined in (\ref{eq:simplexdist}). As mentioned, the vector of proportions for any given year is a probability vector $\mathbf{p}^{(t)}$ of dimension $|A_t|$. As the sum of the elements is 1, $\mathbf{p}^{(t)}$ naturally lies on a simplex $\Delta=\{ (x_1,...,x_N): \sum x_i=1, x_i \geq 0 \}$ of dimension $N=|A_t|$. When we compute the distance (\ref{eq:simplexdist}), we are implicitly embedding both $\mathbf{p}^{(t)}$ and $\mathbf{p}^{(s)}$ in a larger simplex of dimension $N=|A_t \cup A_s|$. In fact, when we compute all distances $d(s,t)$ one may consider the vectors $\mathbf{p}^{(t)}$ as having been embedded in a larger simplex of dimension $N=|A|$ where $A = \cup_{t=1}^T A_t$ is the set of all recorded groups in the region across all time.

Now simplices (including the higher-dimensional embedded simplex) are convex sets, that is, any line segment between two points in a simplex lies entirely within the simplex. Thus, we were initially motivated to use an alternative distance to (\ref{eq:simplexdist}) that bears explaining. As each probability vector $\mathbf{p}^{(t)}$ lies within a larger simplex and the line segment between $\mathbf{p}^{(s)},\mathbf{p}^{(t)}$ lies entirely within the simplex, an initial choice of distance could be the geometric straight line distance along the line segment, that is,
\begin{align}
\label{eq:simplexdist_alt}
    \left(\sum_{z \in A_s \cup A_t} |p^{(s)}_z - p^{(t)}_z|^2\right)^\frac12.
\end{align}
This distance may have other uses in future works, as it can be interpreted as the straight line distance along a path of deformation (in the geometric sense) between $\mathbf{p}^{(s)}$ and $\mathbf{p}^{(t)}$. But we think it is necessary to justify why we use (\ref{eq:simplexdist}) over (\ref{eq:simplexdist_alt}). We present both a geometric and probabilistic explanation.

First, the simplex may be interpreted as (the positive quadrant of) a unit sphere in the vector space $\mathbb{R}^N$ under the $L^1$ norm $\| x\|=|x_1|+...+|x_N|$ and then imbuing $\mathbb{R}^N$ with this norm naturally motivates using it to measure the distance between $\mathbf{p}^{(s)}$ and $\mathbf{p}^{(t)}$, as we have done in (\ref{eq:simplexdist}) up to a constant factor. And secondly, the quantity in (\ref{eq:simplexdist}) can be reinterpreted precisely (including the factor of $\frac12$) probabilistically as the \emph{discrete Wasserstein distance} between $\mathbf{p}^{(s)}$ and $\mathbf{p}^{(t)}$. Essentially, let the set $A$ of all perpetrator groups in a region be endowed with a metric such that $d(x,y)=1$ if $x\neq y$ and 0 otherwise. Then the Wasserstein metric between two distributions on $A$, which measures the ``work'' (in the sense of physics) to change one distribution into the other, coincides with the formula (\ref{eq:simplexdist}). This is discussed in greater detail in \cite{James2021_geodesicWasserstein} and  \ref{app:discreteWass}.

Finally, we define $\nu(R)$, a measure of the total heterogeneity of the composition of attacks across our period of analysis. Given an $S \times S$ matrix $A$, we define its Frobenius norm by $\|A\|=\left( \sum_{i,j=1}^S A_{ij}^2  \right)^\frac12.$ Let $\nu(R)=\|D^R\|$, intuitively, regions with a low $\nu(R)$ experience attacks rather consistently from the same set of organizations year after year; those with a high $\nu(R)$ experience attacks from organizations that change their composition substantially with time.

We remark that we could repeat the above analysis by removing all attacks with ``unknown'' perpetrator prior to any computations of $p_x$ and $\nu(R)$. However, an ``unknown'' perpetrator should not merely be considered as randomly missing data. Terrorist groups regularly take credit for their attacks for a complex variety of reasons and ``credit-taking has the potential to tell observers a great deal about the nature of the threats groups pose'' \cite{terroristcredit}. Thus, significantly different proportions of known vs unknown attacks between countries or regions should be taken into account in any analysis of compositional differences between perpetrators.

In addition, we remark that the value of $\nu(R)$ may reduce considerably with smaller values of $S$, as a smaller $S$ results in far fewer terms in the sum $\|A\|=\left( \sum_{i,j=1}^S A_{ij}^2  \right)^\frac12.$ We consider this appropriate: countries or regions that have entire years with no terrorist activity should be considered to display less heterogeneity in the composition of attacks. Simply put, with many years of no activity, less terrorism is occurring in absolute terms, so we should consider less compositional changes as occurring.

\begin{table}
\centering
\begin{tabular}{{c}{c}{c}}
\toprule
Rank & Country & Norm\\
\midrule
1 & South Sudan & 9.17 \\
2 & Ukraine & 21.71 \\
3 & Pakistan & 28.4 \\
4 & Syria & 30.59 \\
5 & Afghanistan & 31.66 \\
6 & Libya & 44.54 \\
7 & Thailand & 48.55 \\
8 & Lebanon & 50.46 \\
9 & Iraq & 51.05 \\
10 & The Philippines & 52.68 \\
11 & Egypt & 57.14 \\
12 & Bangladesh & 57.25 \\
13 & Russia & 63.93 \\
14 & Colombia & 66.91 \\
15 & United Kingdom & 68.02 \\
16 & Mali & 72.76 \\
17 & Turkey & 73.36 \\
18 & Cameroon & 73.83 \\
19 & India & 91.39 \\
20 & Indonesia & 94.43 \\
21 & Greece & 94.73 \\
22 & Saudi Arabia & 96.02 \\
23 & West Bank and Gaza Strip & 105.18 \\
24 & Yemen & 108.38 \\
25 & Sudan & 112.23 \\
26 & Central African Republic & 113.80 \\
27 & Myanmar & 116.98 \\
28 & Israel & 117.52 \\
29 & Nigeria & 125.06 \\
30 & Algeria & 126.61 \\
31 & Sri Lanka & 127.01 \\
32 & Burundi & 129.00 \\
33 & France & 131.04 \\
34 & Germany & 131.84 \\
35 & Kenya & 150.94 \\
36 & United States & 157.05 \\
37 & Somalia & 160.44 \\
38 & Nepal & 169.58 \\
39 & Spain & 175.93 \\
40 & Democratic Republic of the Congo & 188.99 \\
\bottomrule
\end{tabular}
\caption{Ranking of the 40 countries with the greatest total counts of terrorist events according to their value of $\nu(R)$, defined in Section \ref{sec:simplexdistancesmethod}. Countries with smaller values display more homogeneity in the composition of terrorist attacks from year to year.}
\label{tab:countrynormranking}
\end{table}

\begin{table}
\centering
\begin{tabular}{{c}{c}{c}}
\toprule
Rank & Cluster & Norm\\
\midrule
1 & East China &   22.16\\
2 & Central Russia and Siberia &   32.88 \\
3 & South Asia & 45.96 \\
4 & The Philippines and proximity & 48.10\\
5 & The Middle East & 60.88\\
6 & Colombia and Ecuador & 64.97\\
7 & Southeast Asia & 75.11\\
8 & Europe &  89.61\\
9 & Southern Africa &  113.21\\
10 & East Africa &  114.43\\
11 & North America & 114.92\\
12 & Southern South America & 117.95\\
13 & West Africa & 129.90\\
\bottomrule
\end{tabular}
\caption{Ranking of the $K=13$ determined clusters/regions with the greatest total counts of terrorist events according to their value of $\nu(R)$, defined in Section \ref{sec:simplexdistancesmethod}. Regions with smaller values display more homogeneity in the composition of terrorist attacks from year to year.}
\label{tab:clusternormranking}
\end{table}

\subsection{Results}

In Table \ref{tab:countrynormranking}, we rank the 40 countries with the greatest total counts of terrorist events according to their value of $\nu(R)$, while Table \ref{tab:clusternormranking} ranks the 13 determined clusters of attacks from Section \ref{sec:clusteranalysis}. In Figure \ref{fig:compositionbarplots}, we present bar plots of changing perpetrator composition over time for various countries $R$, while Figure \ref{fig:simplexplots} plots hierarchical clustering on $D^R$ for the same countries.

The countries with the lowest norm are South Sudan, Ukraine, Pakistan, Syria and Afghanistan. While the first four countries are characterized by a large number of unknown attacks, Afghanistan, however, is dominated by attacks from the Taliban after 2003 (Figure \ref{fig:Afghanistan_bar}). The difference in composition between 2000-2002 and subsequent years can be seen in Figure \ref{fig:Afghanistan}. The persistence of such a concentration in the perpetration of terrorism indicates how prolific they were as a terror organization since the turn of the century \cite{Taliban}. The countries with the highest norm are the United States, Somalia, Nepal, Spain, and the Democratic Republic of the Congo. The United States (Figures \ref{fig:USA_bar} and \ref{fig:USA}) has attacks spread over many groups rather than any consistency between actors, while other countries, particularly Somalia, exhibit a regime change in which different groups claim the majority of attacks in different years. Specifically, from 2009-2018 inclusive, al-Shabaab \cite{alshabaab} is consistently the top perpetrator of terrorism in Somalia, accounting for over 80\% of attacks from 2013 onwards (Figures \ref{fig:Somalia_bar} and \ref{fig:Somalia}). From 2000-2008, most attacks are characterized as unknown, with al-Shabaab responsible for at most 15\% of attacks in any single year.

There are several other countries that exhibit noteworthy trends in the composition of perpetrator organizations. In Sri Lanka, ranked 31st in $\nu(R)$, the Liberation Tigers of Tamil Eelam, commonly known as the Tamil Tigers \cite{Nieto2008}, constituted at least 69\% of attack attributions from 2000-2009 inclusive, but that organization ceased to exist following the end of the Sri Lankan civil war \cite{SriLankanwar}. This considerable shift in composition can be seen in Figures \ref{fig:SriLanka_bar} and \ref{fig:SriLanka}. In Colombia, the Revolutionary Armed Forces of Colombia or FARC dominated terrorism between 2000-2015 inclusive, constituting the top perpetrator of terrorism for all but one of these years. After a historic peace deal \cite{Gluecker2021}, FARC combatants were reintegrated into society and attacks ceased. Beyond 2016, the National Liberation Army of Colombia (ELN) \cite{Blanco2020} was responsible for the majority of terrorist attacks (Figures \ref{fig:Colombia_bar} and \ref{fig:Colombia}). Finally, in the Philippines, the New People's Army \cite{harmon2020philippines} accounts for a majority of terrorism of known groups from 2000-2018. However, the recorded composition of perpetrators changes significantly after 2005, as the modal number of attacks are ``unknown'' every year from 2006 (Figures \ref{fig:Philippines_bar} and \ref{fig:Philippines}).

Finally, regarding the determined regions from Section \ref{sec:clusteranalysis}, we see startling geographic trends in the ordering of composition heterogeneity in Table \ref{tab:clusternormranking}. The five clusters/regions with the least $\nu(R)$ (indicating greater heterogeneity in perpetrator attribution) are all in Asia, while the three African regions all fall within the five clusters with the greatest value of $\nu(R)$.

\section{Time series analysis of attack counts}
\label{sec:TSdistributions}
In this final section, we analyze the prevalence of terrorism over time by country, specifically, how many attacks occur throughout the analysis window. 

\subsection{Distances between time series}

Consider a fixed country or region $R$. To examine the trajectories of attack counts on a finer basis, we divide our 19-year period of analysis into individual months $t=1,2,...,P$, where $P=12 T = 228.$ Let $z_R(t)$ be the number of attacks, regardless of attribution, observed in month $t$. To the time series $z(t), t=1,...,P$, we associate the following probability distribution:
\begin{align}
\label{eq:associateddistribution}
    f_R=\frac{1}{\sum_{s=1}^P z_R(s) } \sum_{t=1}^P z_R(t) \delta_t,
\end{align}
where $\delta_t$ is a Dirac delta distribution at $t$. That is, $f_R$ is a distribution that apportions to month $t$ the weight of the attacks observed in that month as a proportion of the total attacks across the whole period. We can then compare different trajectories as distributions by using the $L^1$-Wasserstein metric \cite{DelBarrio}: $D(R_1, R_2)=W_1 (f_{R_1}, f_{R_2}).$ We remark that this is the more traditional Wasserstein distance, between distributions on the real number line with its standard metric, than the discrete Wasserstein distance discussed in Section \ref{sec:simplexdistances}.

This distance has several advantageous properties over more commonly used discrepancy measures between normalized trajectories. For example, previous work \cite{jamescovideu} has used the $L^1$ norm and metric between normalized trajectories, defined as follows:
\begin{align}
\|z_i\|_1= \sum_{t=1}^T z_i(t) \\
\mathbf{v}_i=\frac{z_i}{\|z_i \|_1}\\
  d_{ij}=  \|\mathbf{v}_i - \mathbf{v}_j\|_1
\end{align}
This treats each time series $z_i(t)$ as a vector in $\mathbb{R}^P$, normalizes by its $L^1$ norm, and compares these normalized vectors with the $L^1$ metric \cite{Minkowski}. This distance is suitable in most instances but has some undesirable properties when measuring discrepancy between noisy time series. Specifically, this $L^1$ distance $d_{ij}$ has maximal possible value equal to 2 when $z_i(t)$ and $z_j(t)$ have disjoint support. Practically, this would mean that two countries' trajectories would receive a large $L^1$ discrepancy measure if the attacks fell around the same time but not in exactly the same months. For example, if country $i$ and $j$ had broadly similar trends in events, but in country $i$ more attacks were reported in January and March while country $j$ reported more on February and April, then the $L^1$ distance measure would be larger than their similarity. Further smoothing and averaging can resolve some of these issues, but the Wasserstein metric ameliorates this issue even more, as it is robust to small translations of distributions. That is, if $f$ is a distribution and $f_\delta(x)=f(x+\delta)$, then $W_1(f,f_\delta)=|\delta|$, as shown in \cite{james2021_portfolio}. This means the Wasserstein metric assigns a low value in the case that countries $i$ and $j$ have similar trajectories where attacks just fall in nearby but distinct months.

\subsection{Results}

We apply hierarchical clustering to this metric across the 40 countries with the greatest total counts of terrorist events (Figure \ref{fig:Wassersteincountries}) and the 13 regions identified in Section \ref{sec:clusteranalysis} (Figure \ref{fig:Wassersteinclusters}). Various insights can be gleaned from the cluster structure and pairings. First, the prototypical trajectory of terrorist events among our collection of countries is captured in a large cluster of similarity that extends from Nigeria down to India in Figure \ref{fig:Wassersteincountries}. Containing 26 countries, the primary feature of this cluster is a general increase in terrorist activity over time. A limited amount of heterogeneity exists within this majority cluster, visible in the slight differences between Nigeria (Figure \ref{fig:NigeriaTS}) and India (\ref{fig:IndiaTS}). However, essentially these 26 countries display time series that, considered as distributions, have most of the terrorist attacks occurring toward the end of the period of analysis. The increase in terrorist activity in Africa (with many of the time series resembling that of Nigeria) is particularly noticeable, and has been of growing concern in recent years \cite{Alvi2019,UN_Africa}.

Other than the majority behavior, some outlier countries are revealed. Sri Lanka, Algeria and Spain all share some unusual and striking commonalities in trends in terrorist attacks over time - a striking (relative) dearth of attacks after 2010. This is illustrated in Figures \ref{fig:SriLankaTS} and \ref{fig:SpainTS}, where the time series (considered as distributions) have the majority of their ``weight'' before 2010. Next, the United States and the West Bank and Gaza Strip are observed to be quite different from the majority collection. Indeed, as seen in Figures \ref{fig:USATS} and \ref{fig:GazaTS} respectively, these time series yield distributions that are relatively flat over the entire period of analysis, a rather rare feature among the countries studied. Other outlier countries include Indonesia and to a lesser extent Russia and Greece - all three of these exhibits somewhat idiosyncratic trajectories of terrorism over time, with bursts over differing periods and little relationship to the trends observed in other countries.

These geographic trends are also made apparent in Figure \ref{fig:Wassersteinclusters}. There, the regions where terrorist attacks mostly occur toward the end of the period group together in the bottom cluster, including regions in Africa, South Asia, and the Middle East. Europe and North America cluster together as being characterized more by terrorism, remaining flatter across the period of analysis.

\begin{figure*}
    \centering
    \begin{subfigure}[b]{0.49\textwidth}
        \includegraphics[width=\textwidth]{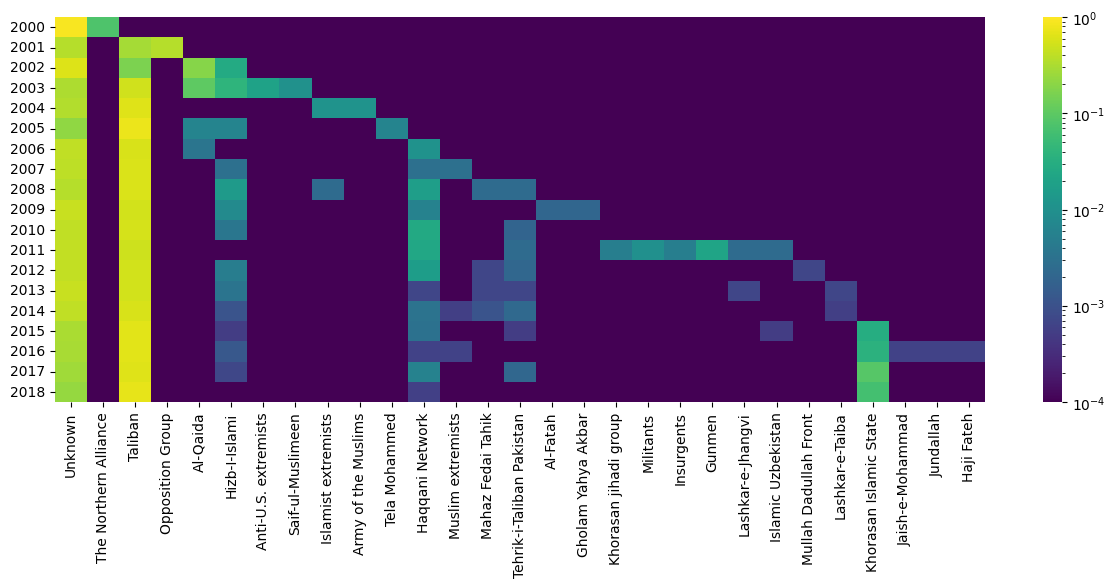}
        \caption{}
    \label{fig:Afghanistan_bar}
    \end{subfigure}
    \begin{subfigure}[b]{0.49\textwidth}
        \includegraphics[width=\textwidth]{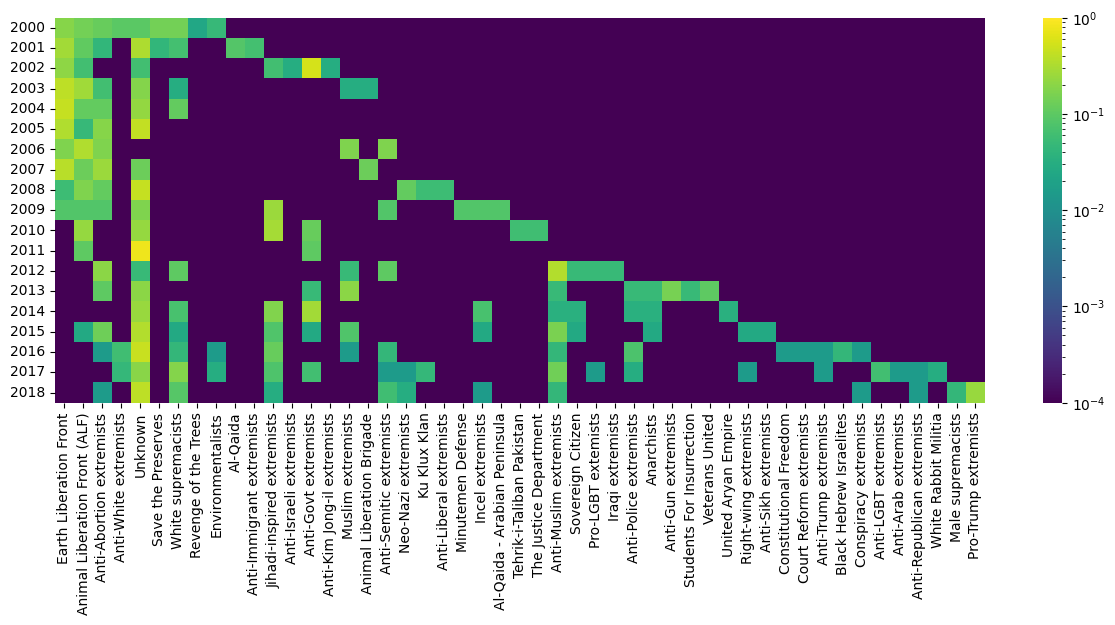}
        \caption{}
    \label{fig:USA_bar}
    \end{subfigure}
    \begin{subfigure}[b]{0.49\textwidth}
        \includegraphics[width=\textwidth]{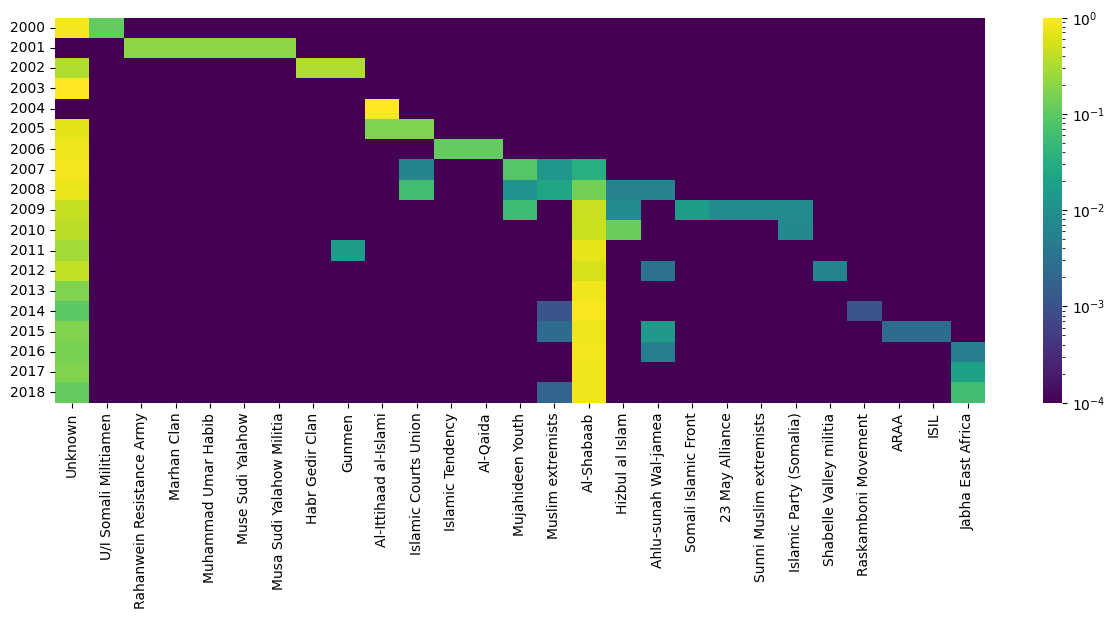}
        \caption{}
    \label{fig:Somalia_bar}
    \end{subfigure}
    \begin{subfigure}[b]{0.49\textwidth}
        \includegraphics[width=\textwidth]{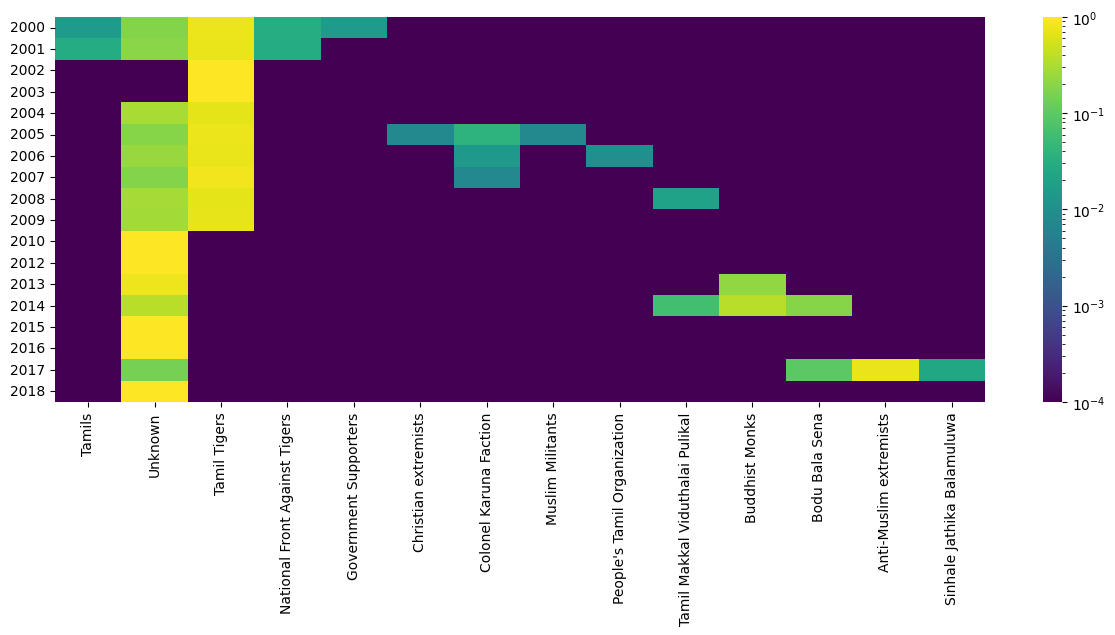}
        \caption{}
    \label{fig:SriLanka_bar}
    \end{subfigure}
    \begin{subfigure}[b]{0.49\textwidth}
        \includegraphics[width=\textwidth]{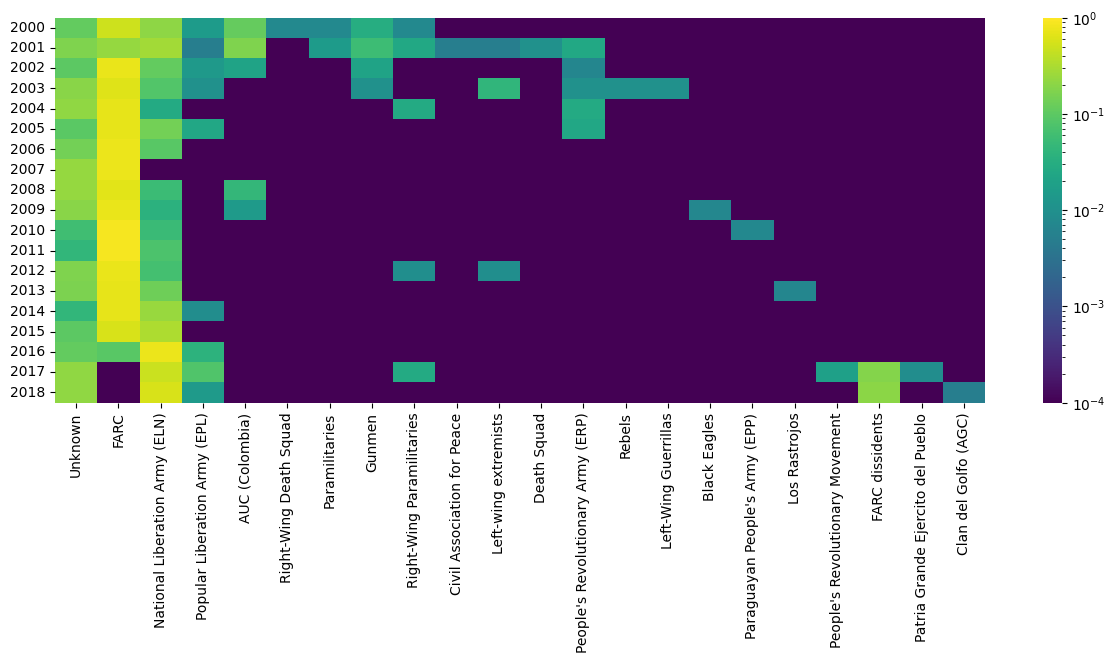}
        \caption{}
    \label{fig:Colombia_bar}
    \end{subfigure}
    \begin{subfigure}[b]{0.49\textwidth}
        \includegraphics[width=\textwidth]{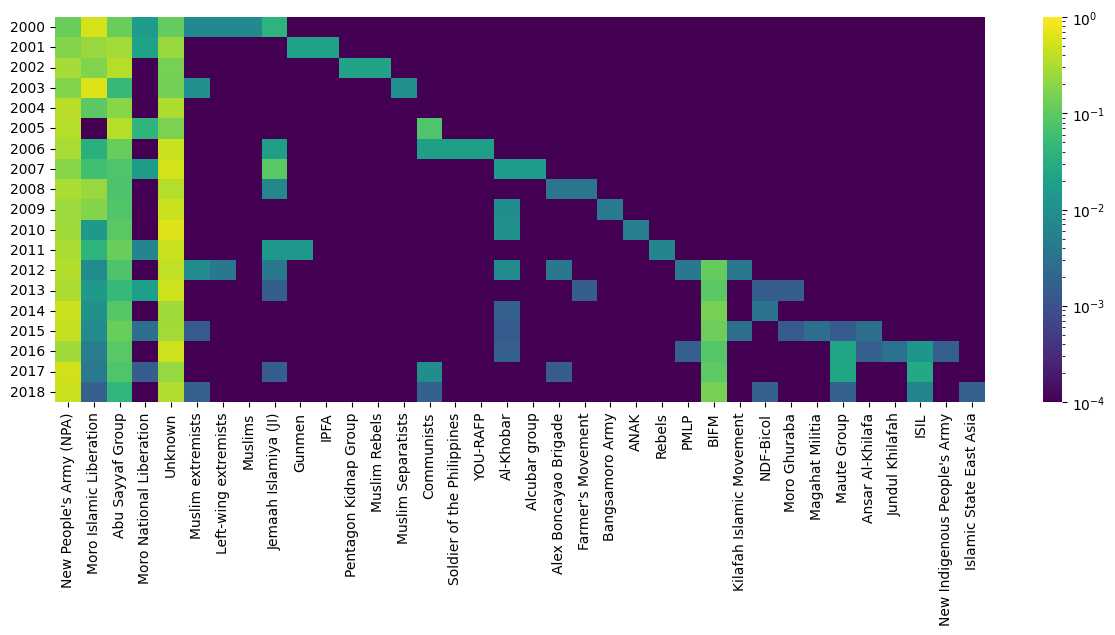}
        \caption{}
    \label{fig:Philippines_bar}
    \end{subfigure}
    \caption{Percentage of attacks of a given year attributed to a perpetrator for (a) Afghanistan, (b) the United States, (c) Somalia, (d) Sri Lanka, (e) Colombia, (f) the Philippines. We use a logarithmic scale for increased visibility of small percentages, adjusting 0 to $10^{-4}$.}
    \label{fig:compositionbarplots}
\end{figure*}

\begin{figure*}
    \centering
    \begin{subfigure}[b]{0.49\textwidth}
        \includegraphics[width=\textwidth]{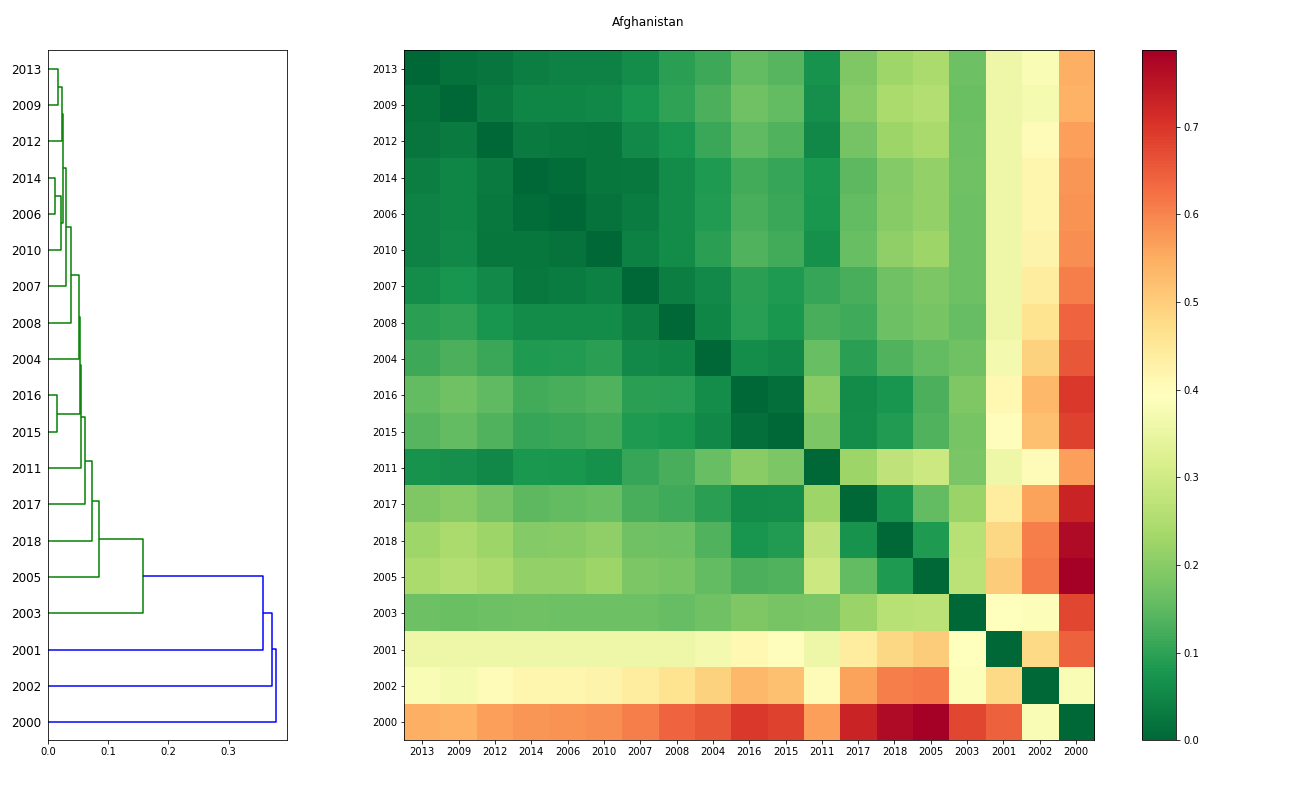}
        \caption{}
    \label{fig:Afghanistan}
    \end{subfigure}
    \begin{subfigure}[b]{0.49\textwidth}
        \includegraphics[width=\textwidth]{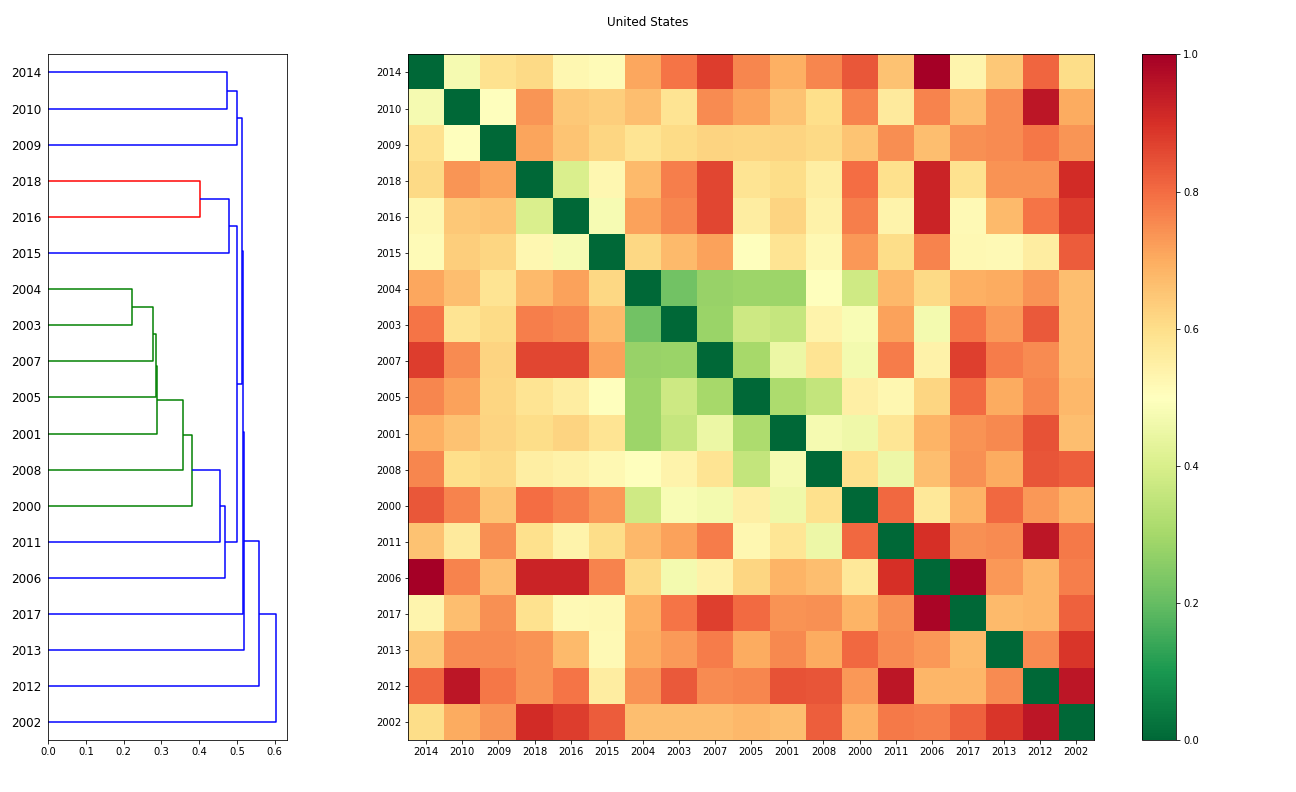}
        \caption{}
    \label{fig:USA}
    \end{subfigure}
    \begin{subfigure}[b]{0.49\textwidth}
        \includegraphics[width=\textwidth]{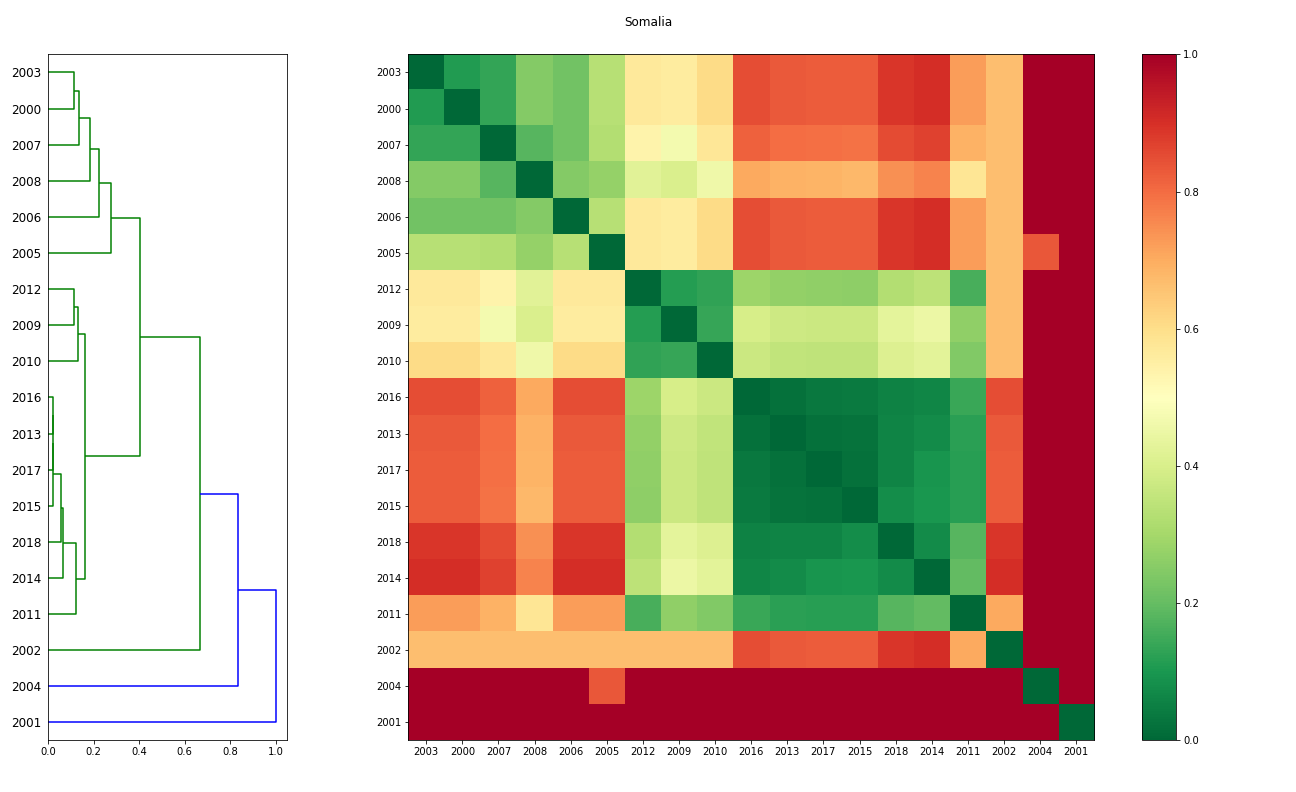}
        \caption{}
    \label{fig:Somalia}
    \end{subfigure}
    \begin{subfigure}[b]{0.49\textwidth}
        \includegraphics[width=\textwidth]{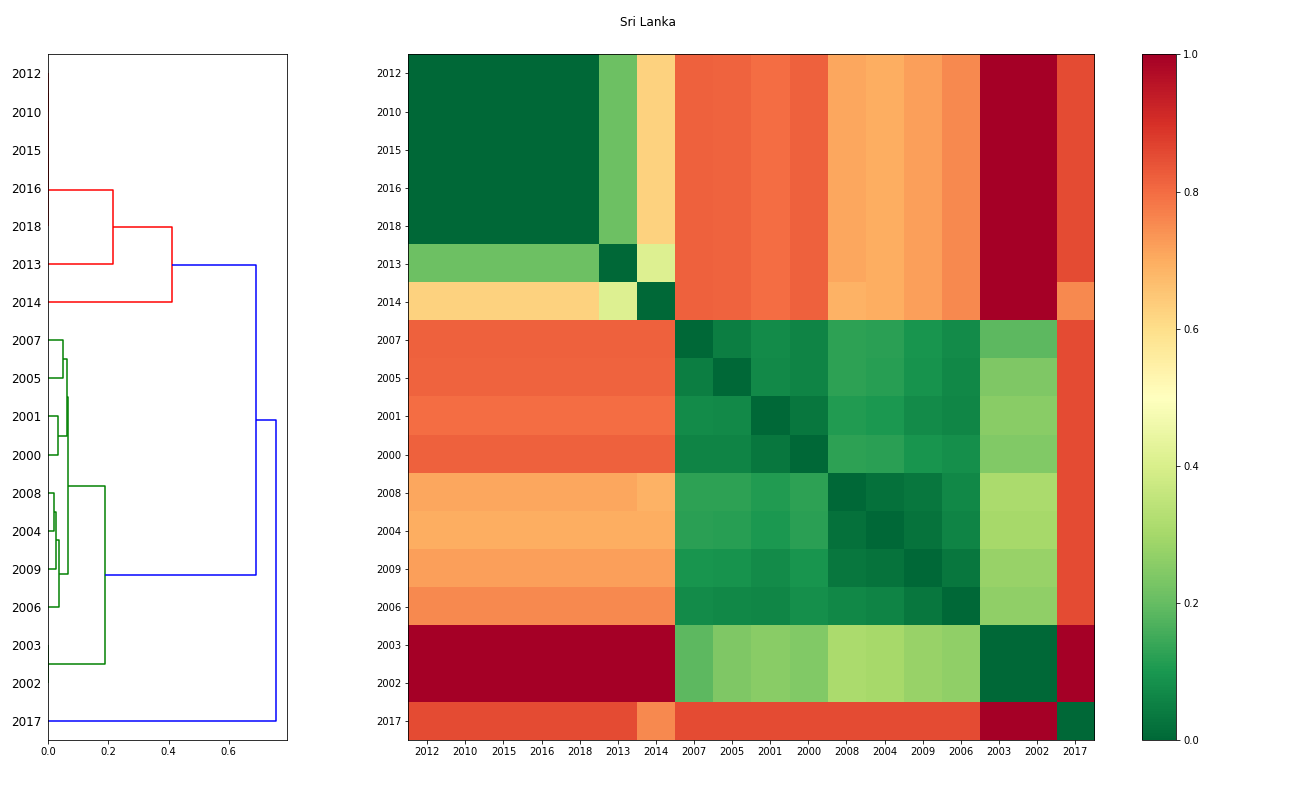}
        \caption{}
    \label{fig:SriLanka}
    \end{subfigure}
    \begin{subfigure}[b]{0.49\textwidth}
        \includegraphics[width=\textwidth]{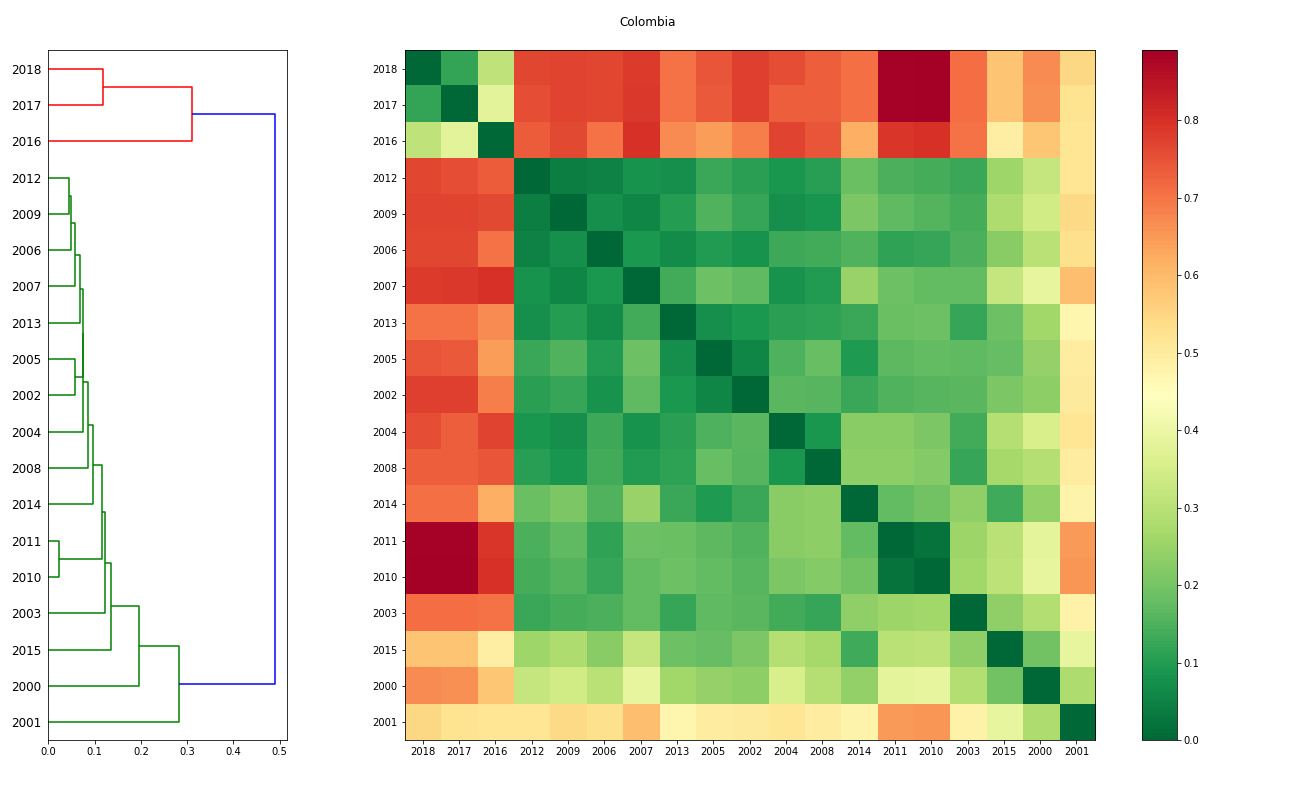}
        \caption{}
    \label{fig:Colombia}
    \end{subfigure}
    \begin{subfigure}[b]{0.49\textwidth}
        \includegraphics[width=\textwidth]{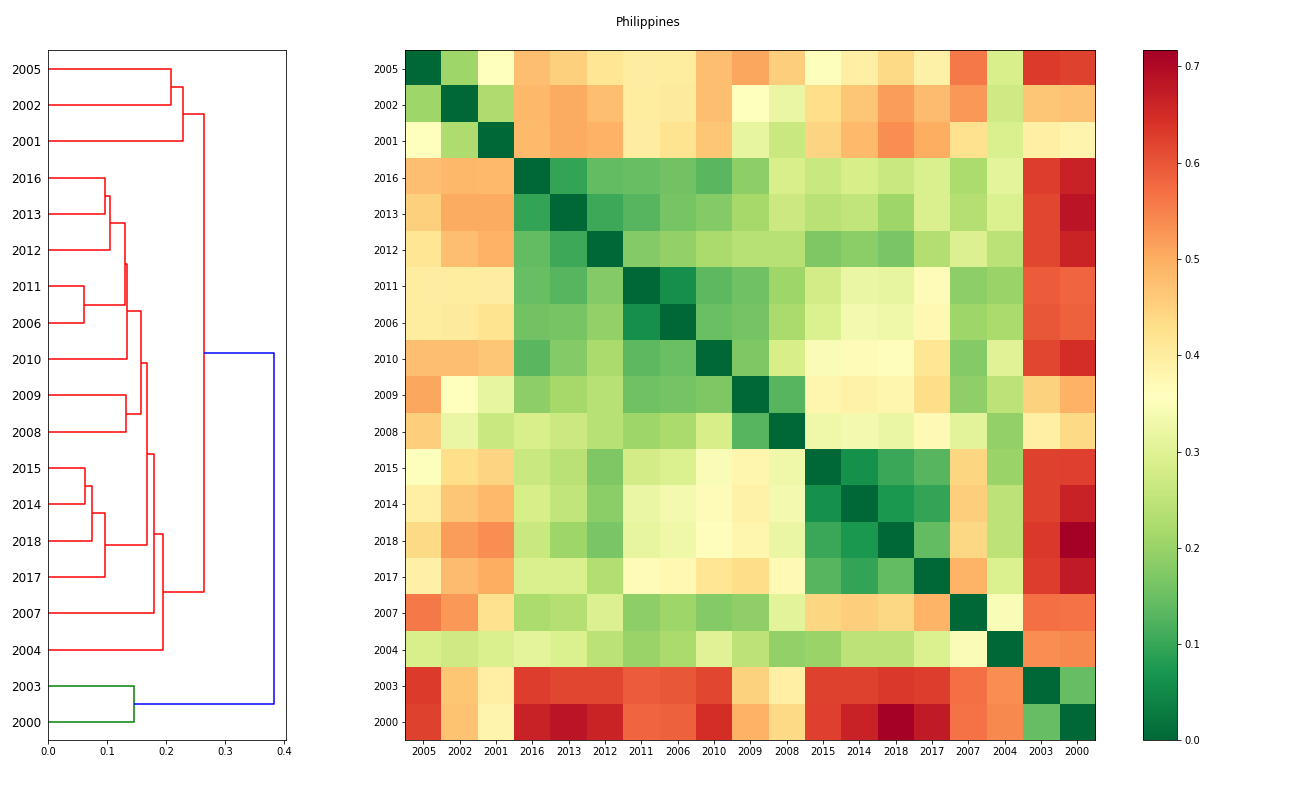}
        \caption{}
    \label{fig:Philippines}
    \end{subfigure}
    \caption{Hierarchical clustering on the perpetrator composition matrix $D^R$ for $R$ being one of six countries: (a) Afghanistan (b) the United States (c) Somalia (d) Sri Lanka (e) Colombia (f) the Philippines.}
    \label{fig:simplexplots}
\end{figure*}

\begin{figure*}
    \centering
    \begin{subfigure}[b]{0.85\textwidth}
        \includegraphics[width=\textwidth]{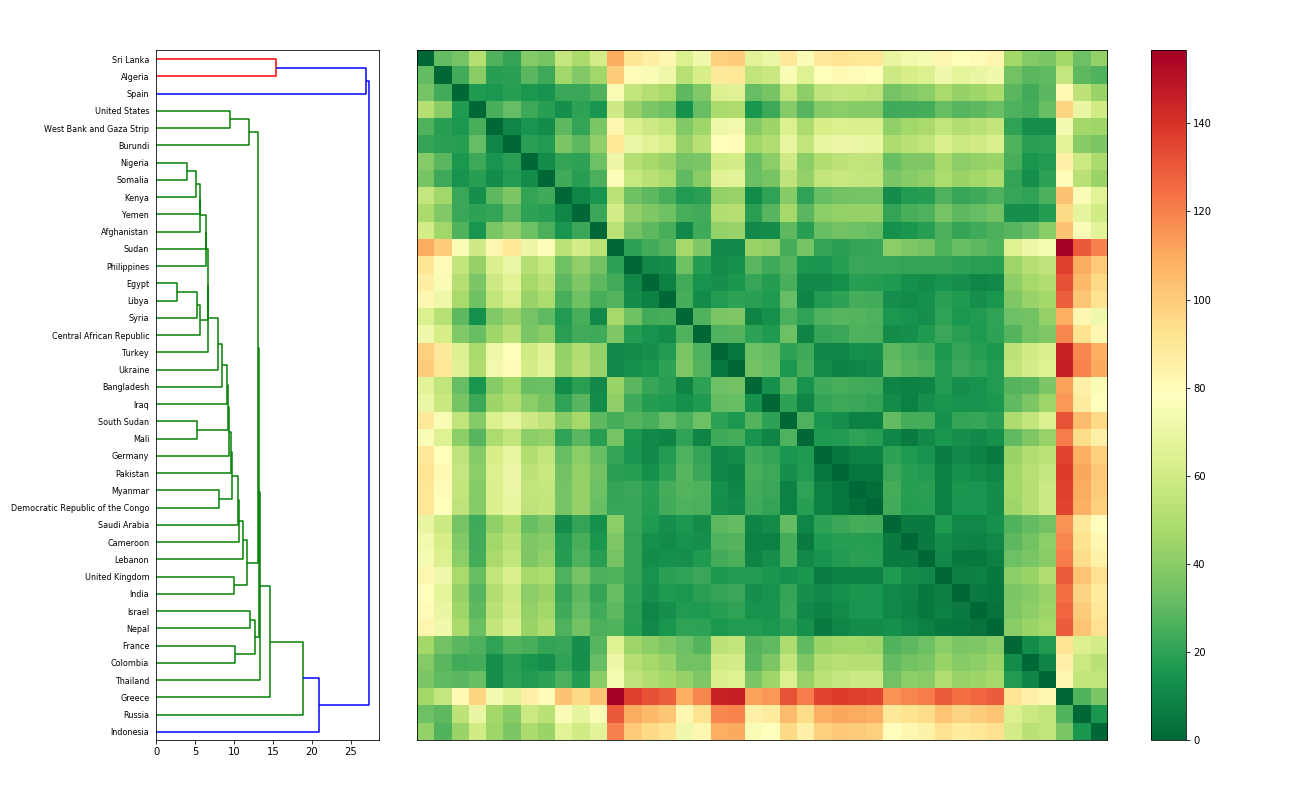}
        \caption{}
    \label{fig:Wassersteincountries}
    \end{subfigure}
    \begin{subfigure}[b]{0.85\textwidth}
        \includegraphics[width=\textwidth]{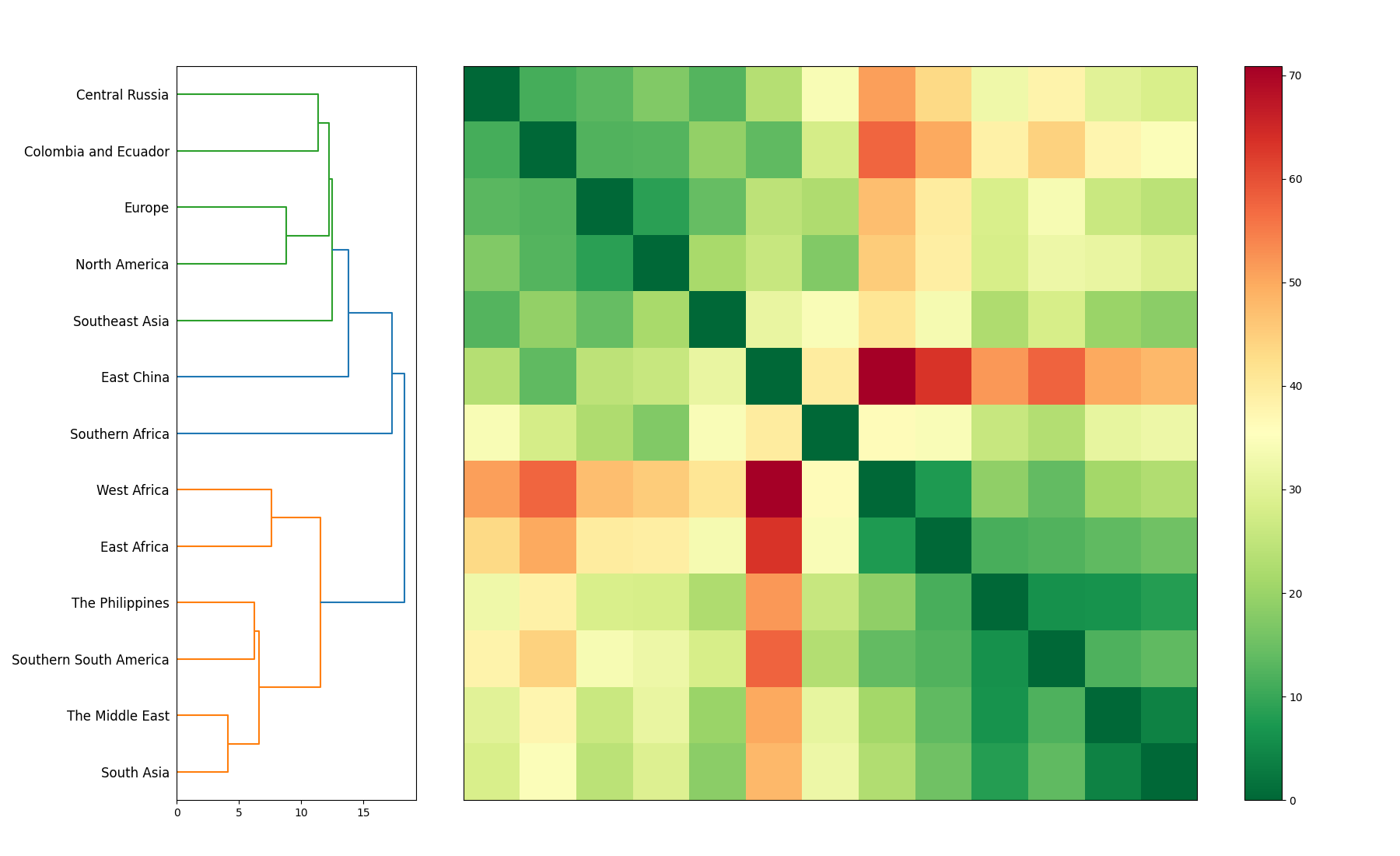}
        \caption{}
    \label{fig:Wassersteinclusters}
    \end{subfigure}
    \caption{Hierarchical clustering applied to the Wasserstein metric $D$ between time series distributions $f_R$ as $R$ ranges over our collection of (a) 40 countries and (b) 13 regions.}
    \label{fig:TSAWasserstein}
\end{figure*}

\begin{figure*}
    \centering
    \begin{subfigure}[b]{0.49\textwidth}
        \includegraphics[width=\textwidth]{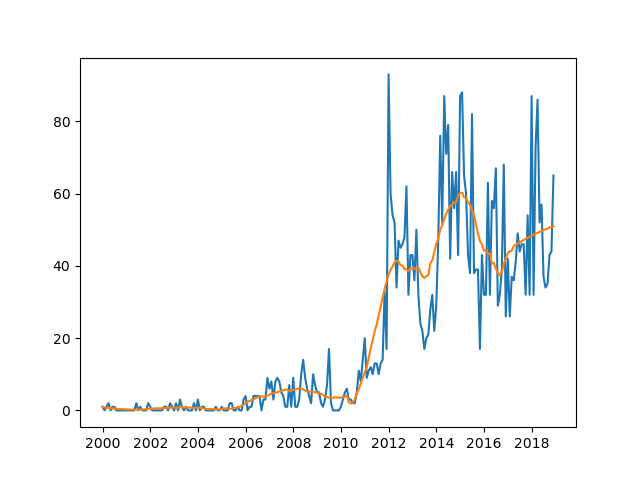}
        \caption{}
    \label{fig:NigeriaTS}
    \end{subfigure}
    \begin{subfigure}[b]{0.49\textwidth}
        \includegraphics[width=\textwidth]{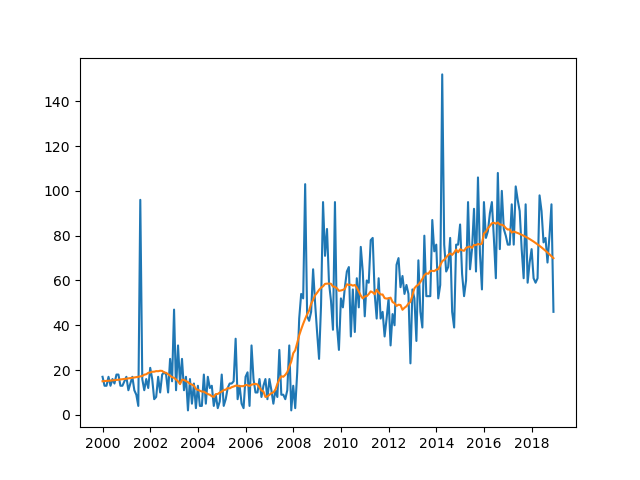}
        \caption{}
    \label{fig:IndiaTS}
    \end{subfigure}
    \begin{subfigure}[b]{0.49\textwidth}
        \includegraphics[width=\textwidth]{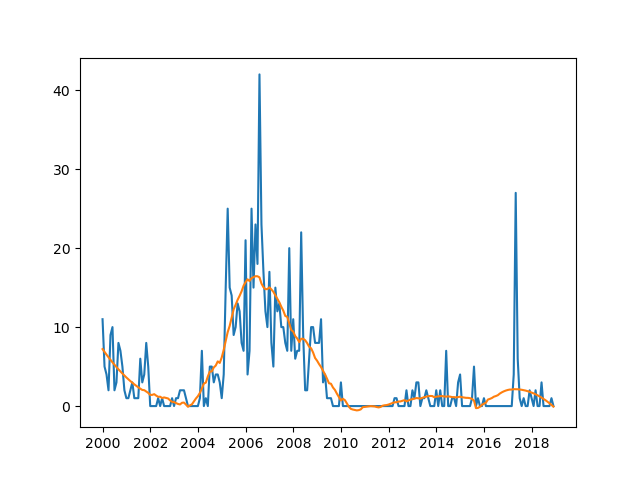}
        \caption{}
    \label{fig:SriLankaTS}
    \end{subfigure}
    \begin{subfigure}[b]{0.49\textwidth}
        \includegraphics[width=\textwidth]{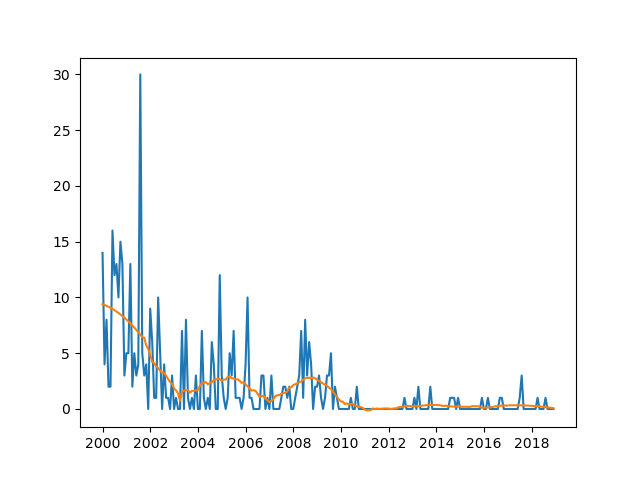}
        \caption{}
    \label{fig:SpainTS}
    \end{subfigure}
    \begin{subfigure}[b]{0.49\textwidth}
        \includegraphics[width=\textwidth]{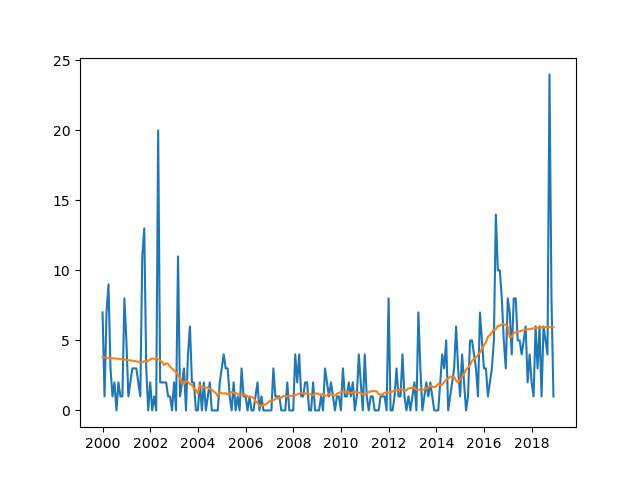}
        \caption{}
    \label{fig:USATS}
    \end{subfigure}
    \begin{subfigure}[b]{0.49\textwidth}
        \includegraphics[width=\textwidth]{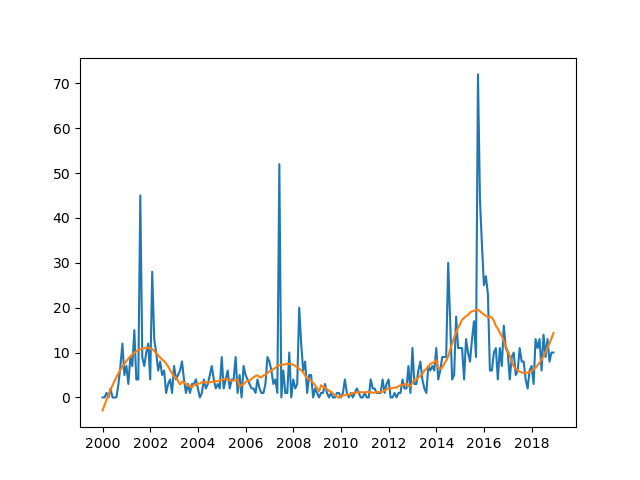}
        \caption{}
    \label{fig:GazaTS}
    \end{subfigure}
    \caption{Time series (and smoothing curve) of monthly terrorist events for (a) Nigeria (b) India (c) Sri Lanka (d) Spain (e) the United States (f) the West Bank and Gaza Strip.}
    \label{fig:TScountryplots}
\end{figure*}



\section{Conclusion}
\label{sec:conclusion}

This paper has applied numerous mathematical approaches to analyze worldwide terrorism, drawing from metric geometry, unsupervised learning, and analysis of distributions and time series. We both reveal persistence and substantial changes in the global profile of terrorism. In Section \ref{sec:clusteranalysis}, we begin with a time-varying cluster analysis of terrorism on a geographic basis across the world, that is, where is it occurring. Cluster analysis is an unsupervised learning technique that generally intends to reveal structure in and simplify large datasets. The results are not always clearly or easily interpretable. In our case, we had positive outcomes: we showed substantial persistence in the cluster structure of terrorism on a geographic basis over time, and determined a convenient and interpretable list of 13 regions into which to divide all the world's terrorist events, other than a negligible proportion of outliers. These determined regions have then been used in other analysis in the paper, but could be used in concert with countless other methods in the applied mathematics and statistics literature, some of which have been applied to terrorism, but many of which have not. Alternatively, one could apply our cluster analysis on a more granular level, such as within a single country, where cluster identification could reveal structure in a setting where geographic locations may not be conducive to easy labeling. We discuss alternative future applications in the final paragraph of this section.

We must remark that our use of the word ``cluster'' is informed by the machine learning/statistical literature, rather than the standard security or foreign policy literature. In the latter, a ``terrorist cluster'' commonly refers to a collection of attacks or perpetrators on a very small geographic level, or a network of connected actors. The former would yield far too many clusters on the scale of the entire world to be informative, while the latter is not the focus of this paper (and is not summarized in the Global Terrorism Database).

In Section \ref{sec:simplexdistances}, we reveal that despite geographic persistence, there are ample changes in the composition of terrorist actors over time. And yet this is far from uniform among different countries and regions. We discuss geometric interpretations of distributions over perpetrator groups and settle on an appropriate discrepancy measurement with both geometric and probabilistic interpretations. Our results show substantial differences in the heterogeneity of changing perpetrator composition over time. In particular, regions and countries of Africa show the greatest heterogeneity, with African regions and countries represented among the highest ranks of heterogeneity in both Table \ref{tab:countrynormranking} and \ref{tab:clusternormranking}. We also discuss numerous high-profile examples of changing composition of terrorist activity, including the Tamil Tigers in Sri Lanka and the FARC in Colombia. Such insights are revealed to even authors with little specialized knowledge in terrorist activity, so we are confident our methodology may be used by experts in other fields to discern distributional changes both of a more subtle nature and in various other contexts.

Finally, Section \ref{sec:TSdistributions} judiciously selects an appropriate metric to understand the discrepancy between the noisy time series of monthly terrorist attack counts. Applying this to both countries and our determined regions, hierarchical clustering is able to quickly reveal noteworthy similarities and differences between distributions over time. The majority of countries, particularly from Africa, are revealed to have terrorist activity dominated by the last decade, but that is not uniform, with a more consistent pattern from the United States and West Bank/Gaza, and the opposite pattern from countries such as Algeria, Sri Lanka and Spain. Applying the same analysis to the determined regions reveals these geographic differences even more clearly. This section, having found broad differences in terrorist activity counts over time between continents and large regions of the world, complements Section \ref{sec:simplexdistances} well, particularly Table \ref{tab:clusternormranking}, where broad differences in the same regions are found with respect to the heterogeneity of perpetrator groups.

The primary limitation of this paper is its global scope. Future work could use these methods, including with more reliable and focused data sets, on a much more granular level, investigating, for example, terrorist activity within a single country over time. With greater precision in location reporting, geographic clusters of activity within a particular country could conceivably be far less consistent over time. One could repeat our analysis of compositional changes by removing ``unknown'' as the affiliation, or could aim to use investigative or supervised learning techniques to identify or guess the affiliation as much as possible. Further, one could incorporate relationships between different groups to modify the distance between distributions over perpetrators, such that related groups are considered closer than unrelated groups. And the number of attacks by different groups could be incorporated into the time series analysis of Section \ref{sec:TSdistributions}, including the use of parametric methods such as ARIMA models or frequency analysis. All our methods could be used in conjunction with a careful comparison of policy responses to identify outlier countries where the composition of terror perpetrators or the number of events is notably greater or lesser. This could reveal which policies were the most and least successful at a variety of goals, either the overall reduction of terrorism or the disruption and management of certain groups.

Overall, we believe this paper has added several mathematical approaches to the as-of-yet-limited use of mathematics, statistics and machine learning to analyze persistence and trends in terrorism, revealed elegant insights therein, and could be incorporated with the work of both other mathematicians and security experts to glean further insights into worldwide terrorism. Given the vast expenditures spent every year on combating terrorism, we would welcome further approaches to its research.

\section*{Conflict of interest}
The authors declare that they have no conflict of interest.

\section*{Data availability statement}
All data analyzed in this study are available at the Global Terrorism Database \cite{GTD}.

\appendix
\section{Existing cluster theory}
\label{app:cluster}

In this section, we provide an overview of the three clustering algorithms required in the above implementations: hierarchical clustering, K-means and K-medoids, and methods to select the optimal number of clusters. In our most general setup, $x_1,\dots,x_n$ are elements of some normed or metric space $\mathfrak{X}$.

\emph{Hierarchical clustering} \cite{Ward1963} is an iterative clustering technique that does not specify discrete partitions of elements. Rather, the methodology seeks to build a hierarchy of similarity between elements. In this paper we use agglomerative hierarchical clustering with the average linkage method \cite{Mllner2013}, where each element $x_i$ begins in its own cluster and branches between them are successively built. The results of hierarchical clustering are commonly displayed in \emph{dendrograms}, as we use in this paper. Unlike K-means, hierarchical clustering does not require the choice of a number of clusters $k$. Further, hierarchical clustering can be implemented on any distance, not necessarily Euclidean space.

\emph{K-means and K-medoids clustering} seek to minimize appropriate sums of square distances. With $k$ chosen \textit{a priori}, we investigate all possible partitions (disjoint unions)  $C_1 \cup C_2 \cup \dots \cup C_k $ of $\{ x_1,\dots, x_n \}$. Let $z_j$ be the \emph{centroid} (average) of the subset $C_j$. For K-means, one seeks to minimize the sum of square distances within each cluster to its centroid:
$$ \sum_{j=1}^k \sum_{x \in C_j} ||x - z_j||^2 $$
For a normed space with dimension of at least 2, it is NP-hard to find the global optimum of this problem. The K-means algorithm \cite{Lloyd1982} is an iterative algorithm that converges quickly and suitably to a locally optimal solution. K-means can typically only be implemented on Euclidean space. K-medoids is a substitute where $z_j$ is restricted to be one of the data points, and selected such that its distance from other elements of the cluster is minimal. Thus, K-medoids can be implemented more generally on any metric space, not necessarily on a normed space with the Euclidean metric.

How to best choose the number of clusters $k$ for the K-means or K-medoids algorithm is a difficult problem. Different methods for estimating $k$ may produce considerably differing results. In this paper, we draw upon 16 methods to determine the appropriate number of clusters before implementing K-means or K-medoids. These methods include among others, Ptbiserial index \cite{Milligan1980}, silhouette score \cite{Rousseeuw1987}, KL index \cite{krzanowski1988}, C index \cite{Hubert1976}, McClain-Rao index \cite{Mcclain1975} and Dunn index \cite{Dunn1974}. The methodology described above is flexible however, and may use any combination of existing methods.

\section{Probability distribution distance}
\label{app:discreteWass}

Let $(X,d)$ be any metric space, $\mu,\nu$ two probability measures on $X$, and $q \geq 1$. The Wasserstein metric between $\mu, \nu$ is defined as
\begin{align}
\label{eq:Wasserstein}
    W^q (\mu,\nu) = \inf_{\gamma} \bigg( \int_{X \times X} d(x,y)^q d\gamma  \bigg)^{\frac{1}{q}},
\end{align}
where the infimum is taken over all probability measures $\gamma$ on $X \times X$ with marginal distributions $\mu$ and $\nu$. Henceforth, let $q=1$. By the Kantorovich-Rubinstein formula \cite{Kantorovich}, there is an alternative formulation when X is compact (for example, finite):
\begin{align}
\label{eq:Wassersteinalt}
    W^1 (\mu,\nu) = \sup_{F} \left| \int_{X} F d\mu - \int_{X} F d\nu  \right|,
\end{align}
where the supremum is taken over all $1$-Lipschitz functions $F: X \to \mathbb{R}$.

\begin{proposition}
Let $(X,d)$ be a finite discrete metric space, with $d(x,y)=1$ for all $x\neq y$ and 0 otherwise. Let $W^1(\mu,\nu)$ be the $L^1$-Wasserstein metric between two probability measures $\mu,\nu$ on $X$, with corresponding distribution functions $f,g$, expressed as in (\ref{eq:Wassersteinalt}). That is,
\begin{align}
\label{eq:Wassersteinalt_new}
    W^1 (\mu,\nu) = \sup_{F} \left| \int_{X} F d\mu - \int_{X} F d\nu  \right|.
\end{align}
Then, this supremum is optimized by the following choice of $F$:
\begin{align}
\label{eq:bestF_new}
F(x) = \begin{cases}
1, & f(x) \geq g(x), \\
0, & f(x)<g(x).
\end{cases}
\end{align}
Thus, $W^1(f,g)$ reduces to the same form of (\ref{eq:simplexdist}), namely
\begin{align}
\label{eq:discreteWass_new}
W^1(f,g)=\frac12 ||f - g||_1.
\end{align}
\end{proposition}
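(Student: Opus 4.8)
The plan is to verify directly that the proposed $F$ in \eqref{eq:bestF_new} achieves the value $\frac12\|f-g\|_1$ in the supremum \eqref{eq:Wassersteinalt_new}, and then to show that no $1$-Lipschitz function can exceed this value. For the first part, I would simply substitute: with $F$ taking values in $\{0,1\}$, the integral $\int_X F\,d\mu - \int_X F\,d\nu$ collapses to $\sum_{x:\,f(x)\geq g(x)} (f(x)-g(x))$. The key elementary fact to invoke is that $\sum_x (f(x)-g(x)) = 1 - 1 = 0$, so the sum of the positive parts of $f-g$ equals the sum of the absolute values of the negative parts, and each equals $\frac12\|f-g\|_1$; hence this particular $F$ yields exactly $\frac12\|f-g\|_1$. (One should note that $F$ is indeed $1$-Lipschitz: since $d(x,y)=1$ for $x\neq y$ and $|F(x)-F(y)|\leq 1$ always, the Lipschitz condition $|F(x)-F(y)|\leq d(x,y)$ holds trivially for all pairs.)

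For the upper bound, I would argue that any $1$-Lipschitz $F:X\to\mathbb{R}$ on this discrete metric has oscillation at most $1$, i.e. $\max_X F - \min_X F \leq 1$, because any two points are at distance exactly $1$. Replacing $F$ by $F - \min_X F$ does not change the value of $\int F\,d\mu - \int F\,d\nu$ (the constant cancels), so we may assume $0 \leq F \leq 1$. Then
\begin{align}
\left| \int_X F\,d\mu - \int_X F\,d\nu \right| = \left| \sum_x F(x)\bigl(f(x)-g(x)\bigr) \right| \leq \sum_x F(x)\,\bigl(f(x)-g(x)\bigr)^+ \!\!+ \sum_x F(x)\,\bigl(g(x)-f(x)\bigr)^+,
\end{align}
wait --- more carefully, I would bound $\sum_x F(x)(f(x)-g(x))$ above by keeping only the terms where $f(x)\geq g(x)$ and using $F(x)\leq 1$ there, and bound it below (to control the absolute value) by keeping only the terms where $f(x)<g(x)$ and using $F(x)\geq 0$ there. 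Either way the absolute value is at most $\sum_{x:\,f(x)\geq g(x)}(f(x)-g(x)) = \frac12\|f-g\|_1$. Combining with the first part shows the supremum equals $\frac12\|f-g\|_1$ and is attained by \eqref{eq:bestF_new}, giving \eqref{eq:discreteWass_new}.

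The main obstacle here is essentially bookkeeping rather than a genuine difficulty: one must be careful that the normalization step (subtracting $\min F$) is legitimate and that the two-sided estimate on $|\sum_x F(x)(f(x)-g(x))|$ is set up correctly so that both the positive-part and negative-part contributions are controlled by the same quantity $\frac12\|f-g\|_1$. Everything rests on the single identity $\sum_x (f(x)-g(x))=0$ together with the fact that the discrete metric forces every $1$-Lipschitz function to have range contained in an interval of length $1$; once those two observations are in hand, the rest is a routine manipulation of finite sums.
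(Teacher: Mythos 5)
Your proof is correct and follows essentially the same route as the paper's: both arguments rest on the oscillation bound $\sup F - \inf F \leq 1$ forced by the discrete metric, the cancellation $\sum_x (f(x)-g(x)) = 0$, and the identity $P = N = \frac12\|f-g\|_1$ for the positive and negative parts. The only cosmetic difference is that you normalize by subtracting $\inf F$ up front, whereas the paper carries the constant $m$ through the estimate and lets it cancel at the end; after your self-correction the two-sided bound is set up properly, so there is no gap.
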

\begin{proof}
Let $F$ be an arbitrary $1$-Lipschitz function on $X$. That is, $F: X \to \mathbb{R}$ and $|F(x) - F(y)| \leq d(x,y) \leq 1$ for all $x,y$. Let $M=\sup_{x\in X} F(x)$ and $ m= \inf_{y \in X} F(y).$ By taking the supremum over $x$ and the infimum over $y$, the Lipschitz condition implies that $M-m\leq 1.$ So
\begin{align}
  \int_{X} F d\mu - \int_{X} F d\nu  &=\sum_{x \in X} F(x)(f(x)-g(x)) \\
  &\leq \sum_{x: f(x) \geq g(x) } M(f(x)-g(x)) + \sum_{x: f(x) < g(x) } m(f(x)-g(x))\\
  &\leq \sum_{x: f(x) \geq g(x) } (m+1)(f(x)-g(x)) + \sum_{x: f(x) < g(x) } m(f(x)-g(x))\\
  &=\sum_{x: f(x) \geq g(x) } f(x)-g(x) + \sum_{x \in X}m(f(x)-g(x))\\
  &=\sum_{x: f(x) \geq g(x) } f(x)-g(x),
\end{align}
using the fact that $\sum_x f(x)=\sum_x g(x)=1$ to eliminate the second summand. Next, let
\begin{align}
    P&=\sum_{x: f(x) \geq g(x) } f(x)-g(x),\\
    N&=\sum_{x: f(x) < g(x) } g(x)-f(x).
\end{align}
So $P-N = \sum_{x\in X} f(x)-g(x) = 0,$ while $P+N = \sum_{x \in X} |f(x)-g(x)|=\|f-g\|_1.$ Thus, we see $P=N=\frac12 \|f-g\|_1,$ and $\left|\int_{X} F d\mu - \int_{X} F d\nu\right| \leq P$. Taking the supremum over $F$, we derive $W^1(f,g) \leq P$. Finally, let $F$ be the function defined in (\ref{eq:bestF_new}). Then $\int_{X} F d\mu - \int_{X} F d\nu$ coincides with $P$ by definition. Thus, the supremal value coincides exactly with $P$, and $P=\frac12 \|f-g\|_1$, as required.

\end{proof}

\bibliographystyle{_elsarticle-num-names}
\bibliography{__newrefs} 
\biboptions{sort&compress}
\end{document}